\documentclass[11pt,reqno,twoside]{amsart}

\usepackage{amssymb}
\usepackage{epstopdf}
\usepackage[asymmetric,top=3.5cm,bottom=3.3cm,left=3.1cm,right=3.1cm]{geometry}
\usepackage{booktabs} 
\usepackage{array} 
\usepackage{paralist} 
\usepackage{verbatim} 
\usepackage{graphicx,subfigure,threeparttable}

\usepackage{tabularx}
\usepackage{amsmath,amsfonts,amsthm,mathrsfs,cite}
\usepackage[usenames]{color}
\usepackage{bm}
\usepackage{tabularx}
\usepackage{longtable}
\usepackage{booktabs} 

\newtheorem{theorem}{Theorem}[section]
\newtheorem{lemma}{Lemma}[section]

\theoremstyle{definition}

\theoremstyle{remark}

\newtheorem{remark}{Remark}[section]
\numberwithin{equation}{section}

\newcommand{\R}{\mathbb{R}}

\newcommand{\rmi}{\mathrm{i}}

\usepackage{amsmath}

\allowdisplaybreaks[4]

\title[Quantitative Direct Sampling]{Quantitative Direct Sampling for Initial Acoustic Sources}

\author{Xiaodong Liu}
\address{State Key Laboratory of Mathematical Sciences, Academy of Mathematics and Systems Science, Chinese Academy of Sciences, Beijing 100190, P. R. China.}
\email{xdliu@amt.ac.cn}

\author{Xianchao Wang}
\address{Corresponding author. School of Mathematics, Harbin Institute of Technology, Harbin, P. R. China.}
\email{xcwang90@gmail.com; xcwang@hit.edu.cn}

\date{} 
\begin{document}

\maketitle

\begin{abstract}

This paper addresses the challenge of quantitatively
reconstructing initial acoustic sources from time-dependent wave
measurements. We introduce novel indicator functions defined
through spacetime integrals of acoustic data and carefully designed auxiliary functions. These indicators are foundational for both proving the uniqueness of source reconstruction and developing a
quantitative direct sampling scheme. Our comprehensive numerical
experiments demonstrate the robustness, accuracy, and computational
efficiency of these methods, highlighting their potential for
practical acoustic imaging applications.

\noindent{\bf Keywords:}~~  inverse initial value problem, uniqueness, direct imaging method, acoustic wave.
\end{abstract}

\section{Introduction}
Recovering unknown initial source distributions from time-dependent acoustic wave measurements is a foundational ill-posed inverse problem in imaging science, with critical real-world applications ranging from industrial noise source localization and coal-mine subsidence monitoring to preclinical thermoacoustic imaging \cite{kuchment2008, Liu2015, Kian}. In practical scenarios, transient acoustic sources (e.g., impulsive structural vibrations or thermoacoustic excitations) generate propagating wave fields recorded by boundary sensors; the core challenge is to invert these noisy measurements to reconstruct the spatial distribution of the initial source — a task complicated by the inherent ill-posedness of the inverse problem and the demand for efficient, quantitative reconstruction methods for real-time imaging.

To formalize this problem, we consider a bounded domain $\Omega \subset \mathbb{R}^d$ ($d=2,3$) with boundary $\partial\Omega$, representing the region containing the unknown initial source. 
When an initial source $S(x)$ is activated (corresponding to the initial velocity distribution for the acoustic field), the resulting radiating wave field $p(x,t)$ satisfies the time-domain wave equation:
\begin{equation}\label{eq:main}
	\begin{aligned}
		&\partial_{tt}p(x,t)-\Delta p(x,t)=0, \quad \quad \quad \quad (x,t)\in  \mathbb{R}^d \times \mathbb{R},\\
		& p(x, 0)=0, \quad \partial_t p(x,0)=S(x), \quad \ \ x\in \mathbb{R}^d.
	\end{aligned}
\end{equation}
This model describes acoustic wave propagation in a homogeneous, isotropic medium, featuring zero initial pressure and an initial velocity field determined by the source $S(x)$. The well-posedness of this forward problem has been rigorously established \cite{CourantHilbert, finch2004determining}, and its unique solution satisfies the following far-field asymptotic expansion \cite{Cakoni2019}
\begin{equation*}
	p(x,t+|x|)=\frac{1}{|x|^{\frac{d-1}{2}}}\left\{p^{\infty}(\hat x, t)+\mathcal{O}\left( \frac{1}{|x|^{\frac{d-1}{2}}}\right)  \right\}, \quad |x|\rightarrow \infty,
\end{equation*}
where $p^{\infty}$ denotes the time-domain far-field pattern of the wave field, and $\hat{x} = x/|x|$ is the unit direction vector. We define two measurement modalities as follows:
\begin{equation*}
\mathcal{M}_{\text{near}}:= \{ p(x,t)\, |\, (x,t)\in \Gamma\times \mathbb{R}  \}\quad\mbox{and}\quad
\mathcal{M}_{\text{far}}:= \{p^{\infty}(\hat x,t)\, |\,  (\hat x,t)\in \mathbb{S}^{d-1}\times \mathbb{R}  \},
\end{equation*}
where $\Gamma  \subset  \mathbb{R}^d \setminus \Omega$ is a closed sensor array surface, and $\mathbb{S}^{d-1}:=\{x\in \mathbb{R}^d\, |\,  |x|=1 \}$ denotes the unit sphere (or circle for $d=2$).
The inverse source problem addressed in this work is to reconstruct $S(x)$ from either $\mathcal{M}_{\text{near}}$ or $\mathcal{M}_{\text{far}}$.

Over the past decades, inverse source problems have emerged as a central research topic in applied mathematics and imaging science, with existing reconstruction techniques broadly classified into frequency-domain and time-domain approaches—each demonstrating distinct advantages and inherent limitations:
\begin{enumerate}
\item \textbf{Frequency-domain methods}: These techniques transform transient acoustic measurements into the frequency domain via Fourier analysis, reducing the time-domain inverse problem to a Helmholtz equation-based inverse source problem \cite{ammari2020, Potthast2006}. Classical frequency-domain approaches include recursive linearization methods \cite{Bao2010, Bao2015, Bao2015rec, Cheng2016, Yuan2023}, Fourier expansion methods \cite{Griesmaier2013, wang2017, zhang2015}, and direct sampling methods \cite{direct-sampling-Liu-Hu, Liu2025,extended source-sparse-Liu-Shi,point source-biharmonic-Liu-Shi-Wang,extended source-near-Liu-Wang, Ji2021, spar-2023LiLiu, Li2023}.
\item \textbf{Time-domain methods}: These approaches operate directly on transient wave measurements, leveraging the causal structure of wave propagation to avoid explicit frequency decomposition. Typical time-domain techniques include time-reversal methods \cite{ammari2013time}, total focusing methods \cite{holmes2005post}, and back-projection methods \cite{kuchment2013radon}, which are well-suited for short-pulse, broadband signals and offer intuitive physical interpretations with low computational overhead.
\end{enumerate}

Notably, frequency-domain approaches provide rigorous, tractable theoretical frameworks for analysis, while time-domain methods excel in practical, real-time imaging scenarios \cite{ammari2025}. However, no existing method fully combines the rigorous theoretical guarantees of frequency-domain methods with the practical efficiency of time-domain approaches, creating a critical gap that motivates the present work to develop a unified hybrid reconstruction framework.

Recent efforts have sought to bridge these two paradigms, extending sampling-based methods (linear sampling \cite{Guo2013}, factorization \cite{Cakoni2019}, MUSIC \cite{Cakoni2017}, and direct sampling method \cite{Geng2025,Guo2024}) to the time domain. However, these time-domain sampling methods remain restricted to qualitative shape reconstruction, face theoretical challenges related to interior eigenvalues \cite{Cakoni2021}, and have not been extended to quantitative source reconstruction. This critical gap between the rigorous theory of frequency-domain methods and the practical efficiency of time-domain approaches, combined with the absence of quantitative time-domain source reconstruction tools motivates the present work.

To address this critical gap, we present a novel quantitative time-domain direct sampling method for reconstructing initial acoustic sources from near-field or far-field measurements.
The proposed method relies solely on spacetime integrals of the time-domain measurements with carefully designed auxiliary functions, enabling efficient computation without solving forward problems—making it uniquely suited for real-time imaging applications. 
A rigorous theoretical foundation is provided via constructive uniqueness results for time-domain inverse source problems, based on both near-field and far-field data.
In particular,  to our best knowledge, this work presents the first time-domain framework that achieves quantitative reconstruction capability while maintaining stringent theoretical consistency. 

The remainder of this paper is organized as follows. Sections 2 and 3 present constructive uniqueness results derived from scattered fields and the corresponding far-field data, respectively. Section 4 reports comprehensive numerical experiments that illustrate the robustness, accuracy, and efficiency of the proposed method in both near-field and far-field regimes. Supplementary uniqueness results for alternative initial source configurations are included in the Appendix.

\section{Uniqueness with the near-field data}
In this section, we establish uniqueness results for reconstructing the source $S$ from near-field measurements taken on $\Gamma:=\partial B_R(0)$, where $B_R(0)$ is a ball centered at $0$ with radius $R$ large enough such that $\overline{\Omega}\subset B_R$. 

We begin with some notation. Let  $\mathscr{S}(\mathbb{R})$ be the Schwartz space and let $\mathscr{S}'(\mathbb{R})$ denote  its dual space of tempered distributions.
We define the distributional pairing as
\begin{equation*}
    \langle f, g \rangle := \int_{-\infty}^{\infty} f(t) g(t)\, \mathrm{d}t,
\end{equation*}
where $f\in \mathscr{S}'(\mathbb{R})$ and $g\in \mathscr{S}(\mathbb{R})$.
For $g\in L^1(\R)$, we define its Fourier transform
\begin{equation*}
\mathscr{F}[g](k):=\int_{-\infty}^{\infty} g(t)\, \mathrm{e}^{\mathrm{i} k t}\, \mathrm{d}t,\quad k\in\R
\end{equation*}
and its inverse Fourier transform
\begin{equation*}
\mathscr{F}^{-1}[g](t):=\frac{1}{2\pi}\int_{-\infty}^{\infty} g(k)\, \mathrm{e}^{-\mathrm{i} k t}\, \mathrm{d}k,\quad t\in\R.
\end{equation*}


By Duhamel's principle, the solution to the system \eqref{eq:main} can be rewritten equivalently as the solution to the following nonhomogeneous problem with zero initial conditions:
\begin{equation}\label{eq:Duham}
	\begin{aligned}
		& \partial_{tt}p(x,t)-\Delta p(x,t)=\delta(t) S(x),  \quad (x,t)\in  \mathbb{R}^d \times \mathbb{R},\\
		& p(x, 0)=0, \quad \partial_t p(x,0)=0, \quad \quad \quad x\in \mathbb{R}^d,
	\end{aligned}
\end{equation}
where $\delta(t)$ denotes the Dirac delta distribution.
The initial conditions for the wave equation are understood in the
distributional sense as traces at $t=0^+$ in the space of
tempered distributions $\mathscr{S}'(\mathbb{R})$.

Denote by $G(x, y; t)$ the fundamental solution of the wave equation
\begin{equation}\label{eq:fundamental}
		G(x, y; t):=
		\begin{cases}
		&\displaystyle\frac{H(t-|x-y|)}{2\pi\sqrt{t^2-|x-y|^2}}, \quad \ d=2, \medskip \\
		&\displaystyle \frac{\delta(t-|x-y|)}{4\pi|x-y|}, \quad \qquad d=3,
	\end{cases}
\end{equation}
where $H(t)$ is the Heaviside step function.
Then the solution of \eqref{eq:main} admits the following representation
\begin{equation}\label{eq:radiating-field}
p(x,t)
		=\int_{-\infty}^{\infty}\int_{\mathbb{R}^d} 
		G(x, y; t-\tau) \delta(\tau)   S(y)\, \mathrm{d}y\, \mathrm{d}\tau 
		=\int_{\mathbb{R}^d} G(x,y; t) S(y)\, \mathrm{d}y.
\end{equation}

To facilitate subsequent theoretical analysis, we recall the  source scattering problem in the frequency domain.
By applying the Fourier transform to the time-dependent wave field $p(x,t)$, we obtain its frequency-domain counterpart
\begin{equation*}
u(x,k):=\mathscr{F}[p(x,\cdot)](k), 
\end{equation*}
in terms of which the system \eqref{eq:Duham} reduces to the Helmholtz equation subject to the Sommerfeld radiation condition
\begin{equation*}
	\begin{aligned}
		& -\Delta u(x,k)- k^2 u(x,k)=S(x),  \quad (x,k)\in  \mathbb{R}^d \times \mathbb{R}^{+},\\
		& \lim_{r\to\infty} r^{\frac{d-1}{2}} 
		\left( \frac{\partial u}{\partial r}-{\rm i}ku \right)=0, \quad \quad r=|x|.
	\end{aligned}
\end{equation*}
The explicit solution to this equation is given by
	\begin{equation}\label{eq:solution-freq}
	u(x,k)=\int_{\mathbb{R}^d} \Phi(x,y; k) \, S(y) \, \mathrm{d}y, 
\end{equation}
where 
\begin{equation}\label{Fundamental-Helhmhotz}
\Phi(x,y; k):=\mathscr{F}[G(x,y; \cdot)](k)=
	\begin{cases}
		&\displaystyle\frac{ \rm i}{4}H_0^{(1)}(k|x-y|), \quad \ d=2, \medskip \\
		&\displaystyle \frac{\mathrm{e}^{{\rm i}k|x-y|}}{4\pi|x-y|}, \qquad \qquad d=3.  
	\end{cases}
\end{equation}
Here, $H_0^{(1)}$ denotes the zeroth-order Hankel function of the first kind.

We now introduce the integral $I(y,z)$ as follows
\begin{equation}\label{I}
I(y,z):=\int_{\partial B_R(0)}\Phi(x, y; k)\frac{\partial F_0(x,z;k)}{\partial \nu_x} \,\mathrm{d}s(x), \quad y,z\in B_R(0),
\end{equation}
where
\begin{equation*}
    F_0(x,z;k):=
    \begin{cases}
		\displaystyle J_0(k|x-z|), \quad  d=2, \\
		\displaystyle j_0(k|x-z|), \quad  \, d=3,
	\end{cases}
\end{equation*}
denotes the (spherical) Bessel function of order zero.
Then the following reciprocity relation holds.
\begin{lemma}\label{I(y,z)=I(z,y)}
For the integral $I$ defined in \eqref{I}, the following reciprocity relation holds:
\begin{equation}
I(y,z)=I(z,y),\quad y,z\in B_R(0).
\nonumber
\end{equation}
\end{lemma}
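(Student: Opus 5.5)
The plan is to compute $I(y,z)$ explicitly with the addition theorems for $\Phi$ and $F_0$ on the sphere $\partial B_R(0)$, and read off the symmetry from the resulting series. A purely Green's-identity argument seems less direct. Since $F_0(\cdot,z)$ is (up to a $k$-dependent constant) $\operatorname{Im}\Phi(\cdot,z)$, Green's second identity in $B_R$ gives relations such as $\int_{\partial B_R}\big(\Phi(x,y)\partial_\nu F_0(x,z)-F_0(x,z)\partial_\nu\Phi(x,y)\big)\,\mathrm{d}s(x)=c_k F_0(y,z)$. These relations alone do not obviously control the antisymmetric part of $I$; one also needs that the boundary is a sphere centred at $0$, and the series makes that dependence explicit.

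\emph{Case $d=2$.} Write $x=R(\cos\theta_x,\sin\theta_x)$, $y=|y|(\cos\theta_y,\sin\theta_y)$, $z=|z|(\cos\theta_z,\sin\theta_z)$.
\begin{itemize}
\item Since $|x|=R>|y|$, Graf's addition theorem gives $\Phi(x,y;k)=\frac{\mathrm{i}}{4}\sum_{n\in\mathbb{Z}}H_n^{(1)}(kR)J_n(k|y|)\mathrm{e}^{\mathrm{i}n(\theta_x-\theta_y)}$.
\item For the entire function, $J_0(k|x-z|)=\sum_{m\in\mathbb{Z}}J_m(k|x|)J_m(k|z|)\mathrm{e}^{\mathrm{i}m(\theta_x-\theta_z)}$, valid for all $x,z$.
\item On $\partial B_R$ the normal derivative is the radial derivative, so $\partial_\nu F_0(x,z)=k\sum_m J_m'(kR)J_m(k|z|)\mathrm{e}^{\mathrm{i}m(\theta_x-\theta_z)}$.
\end{itemize}
Both series converge uniformly with their derivatives on $\partial B_R$ for fixed $y,z\in B_R(0)$, so we may integrate term by term over $\theta_x\in[0,2\pi)$. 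Orthogonality of $\mathrm{e}^{\mathrm{i}n\theta_x}$ keeps only the terms with $m=-n$. Using $J_{-n}=(-1)^nJ_n$, $H^{(1)}_{-n}=(-1)^nH^{(1)}_n$ and $\mathrm{d}s=R\,\mathrm{d}\theta_x$, we get
\begin{equation*}
I(y,z)=\frac{\mathrm{i}\pi kR}{2}\sum_{n\in\mathbb{Z}}H_n^{(1)}(kR)J_n'(kR)\,J_n(k|y|)J_n(k|z|)\,\mathrm{e}^{\mathrm{i}n(\theta_z-\theta_y)}.
\end{equation*}
The coefficient of $\mathrm{e}^{\mathrm{i}n(\theta_z-\theta_y)}$ is invariant under $n\mapsto -n$, because the sign factors $(-1)^n$ cancel in pairs. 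Hence the sum equals $\sum_n(\cdots)\cos\big(n(\theta_z-\theta_y)\big)$, which is visibly symmetric under $y\leftrightarrow z$.

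\emph{Case $d=3$.} The same computation uses the spherical addition theorems:
\begin{itemize}
\item $\Phi(x,y;k)=\mathrm{i}k\sum_{n}\sum_{m}h_n^{(1)}(k|x|)Y_n^m(\hat x)\,j_n(k|y|)\overline{Y_n^m(\hat y)}$ for $|x|>|y|$;
\item $j_0(k|x-z|)=4\pi\sum_{n}\sum_{m}j_n(k|x|)Y_n^m(\hat x)\,j_n(k|z|)\overline{Y_n^m(\hat z)}$, which is real.
\end{itemize}
Orthonormality of the $Y_n^m$ on $\mathbb{S}^2$ yields
\begin{equation*}
I(y,z)=c\,R^2k^2\sum_n h_n^{(1)}(kR)j_n'(kR)\,j_n(k|y|)j_n(k|z|)\,\frac{2n+1}{4\pi}P_n(\hat y\cdot\hat z),
\end{equation*}
after applying the addition theorem $\sum_m Y_n^m(\hat y)\overline{Y_n^m(\hat z)}=\frac{2n+1}{4\pi}P_n(\hat y\cdot\hat z)$. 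This expression is symmetric in $y$ and $z$.

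The main obstacle is bookkeeping rather than analysis. First, the complex conjugations and the $(-1)^n$ factors must be tracked carefully, so that the surviving pairing really produces $\overline{Y(\hat y)}Y(\hat z)$, respectively $\mathrm{e}^{\mathrm{i}n(\theta_z-\theta_y)}$, with symmetric coefficients. The realness of $F_0$'s expansion is exactly what permits this. Second, the interchange of summation and integration must be justified. For this I would use the decay of $J_n(k|y|)$, $j_n(k|y|)$ as $n\to\infty$ when $|y|<R$, which dominates the growth of $H^{(1)}_n(kR)$ and its derivative, so the series converge absolutely and uniformly.
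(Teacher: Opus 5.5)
Your proposal is correct and follows essentially the same route as the paper. You insert the addition theorems for $\Phi$ (valid since $|x|=R>|y|$) and for the real function $F_0$, and use orthonormality on $\partial B_R(0)$ to eliminate the off-diagonal terms. You then note that the surviving coefficients $4\pi\mathrm{i}k^2R^2h_n^{(1)}(kR)j_n'(kR)j_n(k|y|)j_n(k|z|)$ are symmetric in $|y|,|z|$ and independent of $m$, and finish with the spherical-harmonic addition theorem; the paper phrases the same computation as finding the coefficients $a^q_{\tilde q}$ of $I$ in the basis $\overline{Y_q(\hat y)}Y_{\tilde q}(\hat z)$.

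Your explicit two-dimensional computation via Graf's theorem, including the cancellation of the $(-1)^n$ factors, is also correct, whereas the paper simply cites earlier work for $d=2$.
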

\begin{proof}
The reciprocity relation for the two-dimensional case has already been proved in \cite{point source-biharmonic-Liu-Shi-Wang}. We show that it also holds in three dimensions.

Recall that the spherical harmonics
\begin{equation*}
    Y_q(\hat{x}):=Y_n^{m}(\hat{x}),\quad  q\in \mathbb{Q}:=\{(m,n)
\,|\, m=-n,\cdots,n, \  n=0,1,2,\cdots\},
\end{equation*}
form a complete orthonormal system of $L^2(\mathbb{S}^2)$. Thus, the products $\overline{Y_q(\hat{y})}Y_{\tilde{q}}(\hat{z})$ form an orthonormal basis for $L^2(\mathbb{S}^2)\times L^2(\mathbb{S}^2)$. This, in turn, implies that
\begin{equation}
I(y, z)=\sum_{q,\tilde{q}\in\mathbb{Q}} a^{q}_{\tilde{q}}(|y|,|z|)\overline{Y_q(\hat{y})}Y_{\tilde{q}}(\hat{z}),\quad y,z\in B_R(0),
\nonumber
\end{equation}
where
\begin{equation}
a^{q}_{\tilde{q}}(|y|,|z|)=\int_{\mathbb{S}^2}\int_{\mathbb{S}^2} I(y,z) Y_q(\hat{y}) \overline{ Y_{\tilde{q}}(\hat{z})} \,\mathrm{d}\hat{y} \,\mathrm{d}\hat{z}.
\label{coefficient}
\end{equation}
Inserting \eqref{I} into \eqref{coefficient},
we arrive at the following expression
\begin{align}
a^{q}_{\tilde{q}}(|y|,|z|)
&=\int_{\mathbb{S}^2}\int_{\mathbb{S}^2} \Bigg[\int_{\partial B_R(0)}\Phi(x, y; k)\frac{\partial j_0(k|z-x|)}{\partial \nu_x}ds(x)\Bigg]Y_q(\hat{y})\overline{Y_{\tilde{q}}(\hat{z})} \,\mathrm{d}s(\hat{y}) \,\mathrm{d}s(\hat{z})  \nonumber\\
&=\int_{\partial B_R(0)}\int_{\mathbb{S}^2}\Bigg[\int_{\mathbb{S}^2}\Phi(x, y; k) Y_q(\hat{y})ds(\hat{y})\Bigg]\frac{\partial j_0(k|z-x|)}{\partial \nu_x} \overline{Y_{\tilde{q}}(\hat{z})} \,\mathrm{d}s(\hat{z}) \,\mathrm{d}s(x)  \nonumber\\
&={\rm i}kj_n(k|y|) h_n^{(1)}(kR)\int_{\partial B_R(0)}\Bigg[\frac{\partial}{\partial\nu_x}\int_{\mathbb{S}^2}j_0(k|z-x|)\overline{Y_{\tilde{q}}(\hat{z})}ds(\hat{z})\Bigg] Y_q(\hat{x}) \,\mathrm{d}s(x)  \nonumber\\
&=4{\rm i} \pi k^2 R^2  h_n^{(1)}(kR) j_{\tilde{n}}^{'}(kR)j_n(k|y|)j_{\tilde{n}}(k|z|)\int_{\mathbb{S}^2}Y_q(\hat{x})\overline{Y_{\tilde{q}}(\hat{x})} \,\mathrm{d}s(\hat{x})  \nonumber\\
&=4{\rm i} \pi k^2 R^2  h_n^{(1)}(kR) j_{\tilde{n}}^{'}(kR)j_n(k|y|)j_{\tilde{n}}(k|z|)\delta^q_{\tilde{q}}, \quad y,z\in B_R(0). \label{a-q}
\end{align}
where $\delta^q_{\tilde{q}}$ denotes the Kronecker delta symbol.
In the third and fourth equalities, we have applied the addition theorems \cite[Sec.~2.4, Theorem~2.11]{Colton}
\begin{equation*}
    \Phi(x, y; k)=
    {\rm i}k\sum_{q\in\mathbb{Q}} h_n^{(1)}(k|x|) Y_q(\hat{x})j_n(k|y|)\overline{Y_{q}(\hat{y})}, \quad |x|>|y|
\end{equation*}
and 
\begin{equation*}
    j_0(k|z-x|)=
    4\pi\sum_{q\in\mathbb{Q}} j_n(k|x|) \overline{Y_q(\hat{x})} j_n(k|z|) Y_{q}(\hat{z}), 
\end{equation*}
respectively.
From \eqref{a-q}, we derive that 
\begin{equation*}
a^{q}_{\tilde{q}}(|y|,|z|)=0\,\,\mbox{for}\,\,q,\tilde{q}\in\mathbb{Q},\, q\neq \tilde{q}\quad\mbox{and}\quad
a^{q}_{q}(|y|,|z|)=a^{q}_{q}(|z|,|y|)\,\,\mbox{for}\,\, q\in\mathbb{Q}. 
\end{equation*}
Noting that the coefficients  $a^{q}_{q}$ are independent of $m$  and applying 
the addition theorem \cite[Sec.~2.3, Theorem~2.9]{Colton}
\begin{equation*}
\sum_{m=-n}^n Y_n^{m}(\hat{x}) \overline{Y_n^{m}(\hat{y})}=\sum_{m=-n}^n Y_n^{m}(\hat{y}) \overline{Y_n^{m}(\hat{x})}=\frac{2n+1}{4\pi} P_n(\cos \theta),
\end{equation*}
where $P_n$ denotes the real-valued Legendre polynomials, and $\theta$ is the angle between $\hat x$ and $\hat{y}$, 
we deduce that 
\begin{equation*}
\begin{aligned}
I(y, z)&=\sum_{q\in\mathbb{Q}} a^{q}_{q}(|y|,|z|)\overline{Y_q(\hat{y})}Y_{q}(\hat{z})\\
&=\sum_{n=0}^{\infty}  a^{n}_{n}(|y|,|z|) \sum_{m=-n}^{n} \overline{Y_n^m(\hat{y})} {Y_{n}^m(\hat{z})}\\
&=\sum_{n=0}^{\infty}  a^{n}_{n}(|y|,|z|) \sum_{m=-n}^{n} \overline{Y_{n}^m(\hat{z})} {Y_n^m(\hat{y})} \\
&=\sum_{q\in\mathbb{Q}} a^{q}_{q}(|z|,|y|)\overline{Y_{q}(\hat{z})}Y_q(\hat{y})\\
&=I(z, y),\quad y,z\in B_R(0).
\end{aligned}
\end{equation*}
This completes the proof.
\end{proof}

Next, we present the uniqueness results for reconstructing the initial source from near-field measurements $ \mathcal{M}_{\text{near}}$ in three and two dimensions, respectively. Note that the uniqueness results have been established in \cite{finch2004determining} under the assumption that the support $\Omega$ is strictly convex. We remove this assumption and give a constructive proof.

\begin{theorem}\label{thm:near3D}
The source function $S\in C_c^{\infty}(\Omega)$ in $\R^3$ admits the representation
	\begin{equation}\label{eq:indicator}
		S(y)  =2   \int_{\Gamma}  \langle\partial_{\nu_x} \partial_t  G(x,y; \cdot), \, p(x,\cdot)
		\rangle \, \mathrm{d}s(x),\quad y\in\R^3.
   \end{equation}
\end{theorem}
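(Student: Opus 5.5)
We pass to the frequency domain with Parseval's identity and then use the reciprocity relation of Lemma~\ref{I(y,z)=I(z,y)}.

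\textbf{Step 1: Parseval.} Both $p(x,\cdot)$ and $G(x,y;\cdot)$ are real valued. With our convention $\mathscr{F}[g](k)=\int g\,\mathrm{e}^{\rmi kt}\,\mathrm{d}t$, Parseval gives
\[
\langle f,g\rangle=\frac{1}{2\pi}\int_{-\infty}^{\infty}\mathscr{F}[f](k)\,\overline{\mathscr{F}[g](k)}\,\mathrm{d}k .
\]
We have $\mathscr{F}[\partial_t G](k)=-\rmi k\,\Phi(x,y;k)$. Hence the right-hand side of \eqref{eq:indicator} equals
\[
\frac{1}{\pi}\int_{-\infty}^{\infty}(-\rmi k)\int_{\Gamma}\partial_{\nu_x}\Phi(x,y;k)\,\overline{u(x,k)}\,\mathrm{d}s(x)\,\mathrm{d}k .
\]
Because everything is real in time, $\overline{u(x,k)}=u(x,-k)$ and $\overline{\Phi(x,y;k)}=\Phi(x,y;-k)$. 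The integrand in $k$ is therefore Hermitian, and we can fold the integral onto $k>0$ as $2\,\mathrm{Re}$ of the half-line integral.

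\textbf{Step 2: reduction to $I(y,z)$.} Insert \eqref{eq:solution-freq}, $u(x,k)=\int\Phi(x,z;k)S(z)\,\mathrm{d}z$. Then
\[
\int_\Gamma\partial_{\nu_x}\Phi(x,y;k)\,\overline{\Phi(x,z;k)}\,\mathrm{d}s(x)
\]
must be related to $I$. Write $\overline{\Phi}=\Phi-2\rmi\,\mathrm{Im}\,\Phi$ and use $\mathrm{Im}\,\Phi(x,z;k)=\frac{k}{4\pi}j_0(k|x-z|)$ in 3D.

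\emph{The $\Phi\,\partial_\nu\Phi$ part.} Both factors are radiating in $x$ and $y,z$ lie inside $B_R$. Green's second identity outside $B_R$, combined with the radiation condition, shows that $\int_\Gamma \Phi(x,z)\,\partial_\nu\Phi(x,y)\,\mathrm{d}s$ is symmetric in $y,z$. I will track how this combines with the $J$-part below.

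\emph{The $\mathrm{Im}\,\Phi$ part.} It gives $\int_\Gamma\partial_\nu\Phi(x,y)\,j_0(k|x-z|)\,\mathrm{d}s$. Green's identity inside $B_R$ converts it to $I(y,z)$ plus a volume term $j_0(k|y-z|)$: since $\Delta\Phi+k^2\Phi=-\delta_y$, one has $\int_\Gamma(\partial_\nu\Phi\, j_0-\Phi\,\partial_\nu j_0)=-j_0(k|y-z|)$. Lemma~\ref{I(y,z)=I(z,y)} then lets us symmetrize the remaining boundary terms. Carrying out this bookkeeping should produce the key identity
\[
2\,\mathrm{Re}\Big[-\rmi k\int_\Gamma\partial_\nu\Phi(x,y;k)\,\overline{\Phi(x,z;k)}\,\mathrm{d}s(x)\Big]=\frac{k^2}{2\pi^2}\,j_0(k|y-z|)\cdot c
\]
for an explicit constant $c$. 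This is the Helmholtz–Kirchhoff-type identity behind direct sampling.

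\textbf{Step 3: Fourier inversion in 3D.} We have
\[
\int_0^\infty k^2 j_0(k|y-z|)\,\mathrm{d}k=\frac{1}{4\pi}\int_{\R^3}\mathrm{e}^{\rmi\xi\cdot(y-z)}\,\mathrm{d}\xi,
\]
which in the distributional sense equals $2\pi^2\delta(y-z)$. Integrating against $S\in C_c^\infty$ recovers $S(y)$, and the constant $c$ has to match the factor $2$ in \eqref{eq:indicator}. Exchanging the $k$-integral with the $z$- and $x$-integrals is justified by the rapid decay of $\hat S$: the $z$-integral produces $\hat S$-type quantities, and $S$ smooth with compact support gives that decay.

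\textbf{Main obstacle.} I expect the hardest part to be the exact boundary bookkeeping in Step 2. Three pieces must be shown to collapse precisely into the real part times $j_0(k|y-z|)$: the $\Phi\,\partial_\nu\Phi$ term, the $I(y,z)$ versus $I(z,y)$ terms, and the sign conventions under complex conjugation. Reciprocity is what makes the non-symmetric boundary integrals cancel when the real part is taken. I would first check the constants against the spherical-harmonic expansion used in the proof of Lemma~\ref{I(y,z)=I(z,y)}, using the Wronskian $j_n y_n'-j_n'y_n=(kR)^{-2}$, which fixes $c$ independently of $R$.
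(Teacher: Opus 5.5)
Your outline is correct, and through Steps 1--2 it follows the paper's route: Parseval for $\partial_{\nu_x}\partial_t G$ against $p(x,\cdot)\in C_c^\infty(\R)$, folding onto $k>0$, then Green's representation inside $B_R$ together with Lemma~\ref{I(y,z)=I(z,y)}. The weak spot is that Step 2 is announced rather than carried out. Saying ``$c$ has to match the factor $2$'' assumes the conclusion, and that constant is exactly the quantitative content of the theorem.

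The bookkeeping does close with the tools you list. Put $A(y,z)=\int_\Gamma\partial_{\nu_x}\Phi(x,y;k)\,\Phi(x,z;k)\,\mathrm{d}s(x)$ and $B(y,z)=\int_\Gamma\partial_{\nu_x}\Phi(x,y;k)\,j_0(k|x-z|)\,\mathrm{d}s(x)=I(y,z)-j_0(k|y-z|)$. Then
\[
Q:=2\Re\Big[-\mathrm{i}k\int_\Gamma\partial_{\nu_x}\Phi(x,y;k)\,\overline{\Phi(x,z;k)}\,\mathrm{d}s(x)\Big]=2k\,\Im A-\frac{k^2}{\pi}\Re B .
\]
What you actually need is $\Im A$; the symmetry of $A$ in $y,z$ plays no role. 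Write $\Phi=\Re\Phi+\mathrm{i}\frac{k}{4\pi}j_0$ in both factors, so that $\Im A$ splits into two terms. The first is $\int_\Gamma\partial_{\nu_x}\Re\Phi(x,y;k)\,\frac{k}{4\pi}j_0(k|x-z|)\,\mathrm{d}s(x)$, which the real part of Green's representation turns into $\frac{k}{4\pi}\big(\Re I(y,z)-j_0(k|y-z|)\big)$. The second is $\int_\Gamma\frac{k}{4\pi}\partial_{\nu_x}j_0(k|x-y|)\,\Re\Phi(x,z;k)\,\mathrm{d}s(x)=\frac{k}{4\pi}\Re I(z,y)$. Hence
\[
Q=\frac{k^2}{2\pi}\Big(j_0(k|y-z|)+\Re\big[I(z,y)-I(y,z)\big]\Big)=\frac{k^2}{2\pi}\,j_0(k|y-z|)
\]
by the Lemma. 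So $c=\pi$, and the prefactor $\frac{1}{\pi}\cdot\frac{k^2}{2\pi}=\frac{k^2}{2\pi^2}$ is precisely what your Step 3 turns into $S(y)$.

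The paper avoids the detour through $A$ by splitting $\Re[a\bar u]=(\Re a)\,u-\mathrm{i}(\Im u)\,a$ with $a=\partial_{\nu_x}h_0^{(1)}(k|x-y|)$, which gives $j_0+I(z,y)-I(y,z)$ in one line. Your Wronskian remark is more than a consistency check. Expand both factors by the addition theorem and use $\Im\big[h_n^{(1)\prime}(kR)\overline{h_n^{(1)}(kR)}\big]=(j_ny_n'-j_n'y_n)(kR)=(kR)^{-2}$. This gives $Q=\frac{k^2}{2\pi}\sum_n(2n+1)j_n(k|y|)j_n(k|z|)P_n(\hat y\cdot\hat z)=\frac{k^2}{2\pi}j_0(k|y-z|)$ directly, with neither Green's formula nor the Lemma.

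Step 3 is where you genuinely depart from the paper. The paper inserts the Abel factor $\mathrm{e}^{-\epsilon k}$, evaluates $\int_0^\infty k\,\mathrm{e}^{-\epsilon k}\sin(bk)\,\mathrm{d}k$, and recognizes the Poisson kernel of the upper half-space in the limit. You instead use
\[
\int_{\R^3}j_0(k|y-z|)S(z)\,\mathrm{d}z=\frac{1}{4\pi}\int_{\mathbb{S}^2}\mathrm{e}^{\mathrm{i}k\hat\xi\cdot y}\,\widehat S(k\hat\xi)\,\mathrm{d}s(\hat\xi),\qquad \widehat S(\xi)=\int S(z)\mathrm{e}^{-\mathrm{i}\xi\cdot z}\,\mathrm{d}z .
\]
The $k^2\,\mathrm{d}k$ integral is then Fourier inversion in polar coordinates and equals $\frac{(2\pi)^3}{4\pi}S(y)=2\pi^2S(y)$. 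Doing the $z$-integral first makes this absolutely convergent, because $\widehat S$ is Schwartz, so no regularization or limit interchange is needed. This is shorter and more transparent than the Abel/Poisson argument, and it loses nothing.
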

\begin{proof}
Since $S\in C_c^{\infty}(\Omega)$, it follows from \eqref{eq:radiating-field} that $p(x,\cdot) \in C_c^{\infty}(\mathbb{R}) \subset \mathscr{S}(\mathbb{R})$ 
(see \cite[Sec.~7.2, Theorem~7]{Evans}). 
Note that the Fourier transform $\mathscr{F}:\mathscr{S}(\mathbb{R})\rightarrow\mathscr{S}(\mathbb{R})$ is an isomorphism
\cite[Sec.~10.1, Lemma~10.2]{Eskin}, we then have
\begin{equation*}
    u(x,k) = \mathscr{F}[p(x,\cdot)](k) \in \mathscr{S}(\mathbb{R}).
\end{equation*}
It is clear that $\partial_{\nu_x} \partial_t  G(x,y; \cdot)\in \mathscr{S}'(\R)$.
Applying the Fourier transform of tempered distributions \cite{Eskin, Ma}, one has 
\begin{align}
    \langle \partial_{\nu_x} \partial_t  G(x,y; \cdot) , \, p(x,\cdot)
		\rangle 
        &= \langle \mathscr{F}^{-1}  \mathscr{F}[\partial_{\nu_x} \partial_t  G(x,y; \cdot)], \, p(x,\cdot)
		\rangle  \nonumber\\
        &=  \langle  \mathscr{F}[\partial_{\nu_x} \partial_t  G(x,y; \cdot)], \, \mathscr{F}^{-1} [p(x,\cdot)]
		\rangle. \label{Fourier-distributions}
\end{align}
From \eqref{Fundamental-Helhmhotz}, we have
\begin{equation}\label{eq:Fourier-fund}
	\mathscr{F}\left[\frac{\delta(t-|x-y|)}{4\pi|x-y|}  \right](k)
	=\frac{\mathrm{e}^{\mathrm{i}k|x-y|}}{4\pi|x-y|}
	= \frac{\mathrm{i}k}{4\pi}h_0^{(1)}(k|x-y|),
\end{equation}
where $h_0^{(1)}$ denotes the zeroth-order spherical Hankel function of the first kind. 
Furthermore, with the help of \eqref{Fourier-distributions} and the fact that $\mathscr{F}[\partial_t g](k)=-\mathrm{i}k \mathscr{F}[g](k)$,
we  derive that
\begin{equation*}\label{eq:parseval3D}
	\begin{aligned}
		&2\int_{\Gamma}  \langle\partial_{\nu_x} \partial_t  G(x, y; \cdot), \, p(x,\cdot)
		\rangle \, \mathrm{d}s(x)\\
		&=2\int_{\Gamma} \frac{1}{2\pi}\int_{-\infty}^{\infty}  \, \partial_{\nu_x}\left( -\mathrm{i}k
			 \frac{\mathrm{i}k}{4\pi}h_0^{(1)}(k|x-y|)
		\right) \, \overline{u(x,k)}\,\mathrm{d}k\, \mathrm{d}s(x)\\
		&=\frac{1}{4\pi^2}\int_{-\infty}^{\infty}  \int_{\Gamma} \partial_{\nu_x}
		h_0^{(1)}(k|x-y|)\,  \overline{u(x,k)} k^2 \, \mathrm{d}s(x) \,\mathrm{d}k:={\rm RHS}.
	\end{aligned}
\end{equation*}
In the second equality, we have applied Fubini's theorem based on the facts $u(x,\cdot) \in \mathscr{S}(\mathbb{R})$ and
$\big|\partial_{\nu_x} h_0^{(1)}(k|x-y|) \, k^2 \big| \le C k$ for a positive constant $C$ independent of $k$.
	
Taking the imaginary part on both sides of \eqref{eq:solution-freq}, we obtain
	\begin{equation*}
 \Im (u(x,k)) =\frac{k}{4\pi}\int_{\mathbb{R}^3} j_0(k|x-z|) S(z) \, \mathrm{d}z,\quad x\in\R^3, \,k\in\R^+.
\end{equation*}
Then we have
\begin{align}
&\text{RHS}=\frac{1}{2\pi^2}  \int_{0}^{\infty} \int_{\Gamma} \Re\left[ \partial_{\nu_x}
		h_0^{(1)}(k|x-y|) \,  \overline{u(x,k)} \right]  k^2  \,   \mathrm{d}s(x) \, \mathrm{d}k  \nonumber\\
&=\frac{1}{2\pi^2} \int_{0}^{\infty}   \int_{\Gamma} \left[ \frac{\partial j_0(k|x-y|)}{\partial \nu_x}u(x,k)-\mathrm{ i}\Im (u(x,k) ) \frac{\partial h_0^{(1)}(k|x-y|)}{\partial \nu_x}\right]  \, \mathrm{d}s(x)\,  k^2\,  \mathrm{d}k  \nonumber\\
&=\frac{1}{2\pi^2} \int_{0}^{\infty}   \int_{\Gamma} \int_{\mathbb{R}^3} \left[\frac{\partial j_0(k|x-y|)}{\partial \nu_x}\Phi(x,z;k) - j_0(k|x-z|) \frac{\partial \Phi(x, y; k) }{\partial \nu_x}\right] S(z)\mathrm{d}z \, \mathrm{d}s(x)\,  k^2\,  \mathrm{d}k  \nonumber\\    
&=\frac{1}{2\pi^2} \int_{0}^{\infty}    \int_{\mathbb{R}^3} 
 \int_{\Gamma}\left[\frac{\partial j_0(k|x-z|)}{\partial \nu_x}\Phi(x,y;k) - j_0(k|x-z|) \frac{\partial \Phi(x, y; k) }{\partial \nu_x}\right]  \, \mathrm{d}s(x) S(z) \mathrm{d}z\,  k^2\,  \mathrm{d}k  \nonumber\\
		&=\frac{1}{2\pi^2} \int_{0}^{\infty} \int_{\mathbb{R}^3}j_0(k|y-z|) S(z)\, \mathrm{d}z \, k^2\, \mathrm{d}k.\label{RHS-noepsilon}
\end{align}
In the fourth equality, the interchange of integration is justified by $S \in C_c^{\infty}(\Omega)$ and Lemma \ref{I(y,z)=I(z,y)} is applied. In the fifth equality, we have used the Green representation:
\begin{equation*}
   j_0(k|y-z|) = \int_{\Gamma} \big[\partial_{\nu_x} j_0(k|x-z|) \Phi(x, y; k) - j_0(k|x-z|) \partial_{\nu_x}\Phi(x, y; k)\big] \,    \mathrm{d}s(x), \quad y\in \Omega. 
\end{equation*}

In \cite{radon transform-07}, it is shown that ${\rm RHS}=S$ from \eqref{RHS-noepsilon}. Here, we provide an alternative proof.
We rewrite \eqref{RHS-noepsilon} as
\begin{equation*}
   \text{RHS}= \lim_{\epsilon\to 0^+} \frac{1}{2\pi^2} \int_{0}^{\infty} \int_{\mathbb{R}^3}j_0(k|y-z|) S(z)\,  \, k^2\,  {\rm e}^{-\epsilon k}\mathrm{d}z \mathrm{d}k. 
\end{equation*}
By applying Fubini's theorem and using the integration formula \cite[3.944(11)]{Gradshteyn}
\begin{equation*}
\int_0^{\infty} k {\rm e}^{-\beta k} \sin(b k) \, {\rm d}k=- \frac{\partial}{\partial \beta}\left( \frac{b}{b^2+\beta^2} \right), \quad b>0, \, \beta>0,
\end{equation*}
we obtain
\begin{equation*}
\begin{aligned}
   \text{RHS} &= \lim_{\epsilon\to 0^+} \frac{1}{2\pi^2} \int_{\mathbb{R}^3} \int_{0}^{\infty} j_0(k|y-z|)  \, k^2\,  {\rm e}^{-\epsilon k} \mathrm{d}k S(z)\, \mathrm{d}z\\
   &= \lim_{\epsilon\to 0^+} \frac{1}{2\pi^2} \int_{\mathbb{R}^3} \int_{0}^{\infty} \frac{\sin(k|y-z|)}{|y-z|}  \, k\,  {\rm e}^{-\epsilon k} \mathrm{d}k S(z)\, \mathrm{d}z\\
   &=\lim_{\epsilon\to 0^{+}} \frac{1}{2\pi^2} \int_{\mathbb{R}^3} \frac{2\epsilon}{(|y-z|^2+\epsilon^2)^2} \,  S(z)\, \mathrm{d}z\\
   &=S(y),\quad y\in\R^3,
\end{aligned}
\end{equation*}
where the last equality follows from the  Poisson kernel for the upper half-space in 3D \cite[Sec. 2.1, (15)]{Stein}
\begin{equation*}
    \delta(y-z)=\lim_{\epsilon\to 0^{+}} \frac{\epsilon}{\pi^2(|y-z|^2+\epsilon^2)^2}. 
\end{equation*}
This completes the proof.
\end{proof}

\begin{remark}
Theorem \ref{thm:near3D} establishes the relationship between the source and the boundary measurements via the following distributional integral:
\begin{equation*}
	S(y) = 2  \int_{\Gamma} \int_{-\infty}^{\infty} \partial_{\nu_x} \partial_t  G(x,y; t) \,  p(x,t)\,
		 \mathrm{d}t
		\,\mathrm{d}s(x), \quad y\in \R^3.
\end{equation*} 
We conjecture that this result also holds in the two-dimensional case. However, due to the failure of the Huygens' principle in two dimensions, the wave field exhibits temporal tail effect. Consequently, 
$p(x,\cdot)\notin \mathscr{S}(\R^2)$, and therefore the equality \eqref{Fourier-distributions}
does not hold in the two-dimensional case. 
In the following theorem, we provide an alternative representation using the Fourier transform applied to the scattered fields.
\end{remark}

\begin{theorem}\label{thm:near2D}
The source function $S\in C_c(\Omega)$ in $\R^2$ can be represented by 
	\begin{equation}\label{eq:indicator}
		S(y) = \frac{1}{2\pi} \int_0^{\infty} \int_{\Gamma} \Re\left[\partial_{\nu_x}
   		H_0^{(1)}(k|x-y|)\overline{\mathscr{F}[p(x,\cdot)](k)}\right]\, k\,\mathrm{d}s(x)\,\mathrm{d}k,\quad y\in\R^2.
	\end{equation}
\end{theorem}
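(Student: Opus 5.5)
The plan is to follow the second half of the proof of Theorem~\ref{thm:near3D}, now starting directly from the frequency-domain expression. The time-domain pairing step \eqref{Fourier-distributions} is not needed, which is exactly where the two-dimensional tail caused trouble.

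\emph{Step 1: identify $\mathscr{F}[p(x,\cdot)]$.} First I would check that for $x\in\Gamma$ and $k>0$ one has $\mathscr{F}[p(x,\cdot)](k)=u(x,k)$, with $u$ given by \eqref{eq:solution-freq}. In 2D, $p(x,t)$ decays only like $t^{-1}$, so the Fourier integral is understood as an oscillatory (or Abel-regularized) limit; it converges for $k\neq 0$ because $G(x,y;t)\sim (2\pi t)^{-1}$ is monotone for large $t$. In this sense $\mathscr{F}[G(x,y;\cdot)](k)=\frac{\rm i}{4}H_0^{(1)}(k|x-y|)$ by \eqref{Fundamental-Helhmhotz}.

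\emph{Step 2: reduce the integrand.} Take imaginary parts in \eqref{eq:solution-freq}, using $\Im H_0^{(1)}=Y_0$ and $\Re H_0^{(1)}=J_0$. This gives
\[
\Im u(x,k)=\frac14\int_{\R^2}J_0(k|x-z|)S(z)\,\mathrm{d}z .
\]
Next, write $\Re[\partial_\nu H_0^{(1)}\,\overline{u}]$. Exactly as in the 3D computation, it equals
\[
\partial_\nu J_0\,\Re u+\partial_\nu Y_0\,\Im u=\partial_\nu J_0\, u-\mathrm{i}\,\Im u\;\partial_\nu H_0^{(1)} .
\]
Multiplying by $\frac{\mathrm i}{4}$ turns $\partial_\nu H_0^{(1)}$ into $\partial_\nu\Phi$. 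Substituting both representations and exchanging the $\Gamma$ and $z$ integrals ($S$ is compactly supported and continuous, $\Gamma$ is compact) gives an integrand of the form $\int_{\R^2}[\ \cdot\ ]S(z)\,\mathrm{d}z$, with
\[
[\ \cdot\ ]=4\Big[\partial_{\nu_x}J_0(k|x-y|)\Phi(x,z;k)-J_0(k|x-z|)\partial_{\nu_x}\Phi(x,y;k)\Big]
\]
up to the constant. Lemma~\ref{I(y,z)=I(z,y)} in 2D allows swapping $y$ and $z$ in the first term. Green's representation for the entire solution $J_0(k|\cdot-z|)$ in $B_R$ then collapses the $\Gamma$-integral to $J_0(k|y-z|)$. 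Tracking constants should give
\[
\mathrm{RHS}=\frac{1}{2\pi}\int_0^\infty\int_{\R^2}J_0(k|y-z|)S(z)\,\mathrm{d}z\,k\,\mathrm{d}k ,
\]
which is the 2D Fourier inversion formula written in polar coordinates.

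\emph{Step 3: Abel summation.} To justify the inversion for merely continuous $S$, insert $\mathrm e^{-\epsilon k}$ and use Fubini. The Laplace transform $\int_0^\infty J_0(bk)\,k\,\mathrm e^{-\epsilon k}\,\mathrm{d}k=\epsilon(b^2+\epsilon^2)^{-3/2}$ then gives
\[
\frac{1}{2\pi}\int_{\R^2}\frac{\epsilon}{(|y-z|^2+\epsilon^2)^{3/2}}S(z)\,\mathrm{d}z .
\]
This is the Poisson kernel of the upper half-space in 3D restricted to the plane, an approximate identity, so it converges to $S(y)$ as $\epsilon\to0^+$ by the continuity of $S$.

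\emph{Main obstacle.} The hard part is interpreting the outer $k$-integral: the integrand is not absolutely integrable in $k$, and near $k=0$ the factor $\partial_\nu H_0^{(1)}\sim k^{-1}\cdot k$ must be checked for integrability. I would read the statement's $\int_0^\infty\mathrm{d}k$ as $\lim_{\epsilon\to0}$ of the damped integrals, i.e.\ in the same regularized sense as in the proof of Theorem~\ref{thm:near3D}. The exchange of the $k$- and $\Gamma$-integrals would then be justified for each $\epsilon>0$ by dominated convergence, using the bounds $|u(x,k)|\le C(1+|\log k|)$ and $|\partial_\nu H_0^{(1)}(k|x-y|)|\le C(1+k^{-1})$ for $x\in\Gamma$, $y\in\Omega$.
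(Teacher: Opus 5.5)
Your proposal follows the paper's proof essentially step for step: the formula for $\Im u$, the identity $\Re\big[\partial_{\nu_x} H_0^{(1)}\,\overline{u}\big]=\partial_{\nu_x} J_0\,u-\mathrm{i}\,\Im u\,\partial_{\nu_x} H_0^{(1)}$, Lemma~\ref{I(y,z)=I(z,y)} with Green's formula to reach $\frac{1}{2\pi}\int_0^\infty\int_{\R^2}J_0(k|y-z|)S(z)\,\mathrm{d}z\,k\,\mathrm{d}k$, then the damping $\mathrm{e}^{-\epsilon k}$, the Laplace transform of $kJ_0$ and the Poisson kernel; reading the $k$-integral as an Abel limit from the start (and checking $\mathscr{F}[p(x,\cdot)]=u$ despite the 2D tail) is more careful than the paper, which inserts the damping only at the end. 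Two harmless slips: the bracket has no factor $4$ (your final constant $\frac{1}{2\pi}$ is correct), and $|\partial_{\nu_x}H_0^{(1)}(k|x-y|)|$ grows like $k^{1/2}$ as $k\to\infty$ rather than obeying $C(1+k^{-1})$, which is still fine under $\mathrm{e}^{-\epsilon k}$; also, as in the paper, the reciprocity lemma and Green's formula only give the identity for $y\in B_R(0)$, not all of $\R^2$.
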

\begin{proof}
By the reciprocity relation (Lemma \ref{I(y,z)=I(z,y)}) and the imaginary part of $u(x,k)$ 
\begin{equation*}
	\Im (u(x,k)) =\frac{1}{4}\int_{\mathbb{R}^2} J_0(k|x-z|) S(z) \, \mathrm{d}z,\quad x\in\R^2, \,k\in\R^+,
\end{equation*}
the right-hand side of \eqref{eq:indicator} is given by
\begin{equation*}
	\begin{aligned}
		&\text{RHS} =\frac{1}{2\pi} \int_{0}^{\infty} \int_{\Gamma}  \Re\left[\partial_{\nu_x}
		H_0^{(1)}(k|x-y|) \overline{u(x,k)}\right]\, k\,\mathrm{d}s(x) \,\mathrm{d}k \\
        &=\frac{1}{2\pi}    \int_{0}^{\infty}  \int_{\Gamma} \left[ \frac{\partial J_0(k|x-y|)}{\partial \nu_x}u(x,k)-\mathrm{ i}\Im (u(x,k) ) \frac{\partial H_0^{(1)}(k|x-y|)}{\partial \nu_x}\right]  \mathrm{d}s(x) \, k\,  \mathrm{d}k\\
&=\frac{1}{2\pi}    \int_{0}^{\infty}   \int_{\Gamma}  \int_{\mathbb{R}^2} \left[ \frac{\partial J_0(k|x-y|)}{\partial \nu_x} \Phi(x, z; k) - J_0(k|x-z|) \frac{\Phi(x, y; k)}{\partial \nu_x}  \right] S(z) \,\mathrm{d}z\, \mathrm{d}s(x) \,  k\,  \mathrm{d}k\\
&=\frac{1}{2\pi}    \int_{0}^{\infty}   \int_{\Gamma}  \int_{\mathbb{R}^2} \left[ \frac{\partial J_0(k|x-z|)}{\partial \nu_x} \Phi(x, y; k) - J_0(k|x-z|) \frac{\Phi(x, y; k)}{\partial \nu_x}  \right]  S(z) \,\mathrm{d}z\, \mathrm{d}s(x) \,  k\,  \mathrm{d}k\\
&=\frac{1}{2\pi} \int_{0}^{\infty} \int_{\mathbb{R}^2}J_0(k|z-y|) S(z)\, \mathrm{d}z \, k\, \mathrm{d}k,
\end{aligned}
\end{equation*}
where the last equation follows from the interaction the order of integration as $S \in C_c(\Omega)$ and Green's formula 
\begin{equation*}
   J_0(k|y-z|) = \int_{\Gamma} \partial_{\nu_x} J_0(k|x-z|) \Phi(x, y; k) - J_0(k|x-z|) \partial_{\nu_x}\Phi(x, y; k) \,    \mathrm{d}s(x), \quad y\in \Omega. 
\end{equation*} 

According to the Hankel transform \cite[ (10.22.49) ]{Olver}, one has 
\begin{equation*}
\int_{0}^{\infty} k  {\rm e}^{-\epsilon k}   J_0(k|y-z|) \, \mathrm{d}k=\frac{\epsilon}{(\epsilon^2+|y-z|^2)^{3/2}}.
\end{equation*}
By similar arguments to the three dimensional case, one can derive that 
\begin{equation*}
\begin{aligned}
       \text{RHS}&= \lim_{\epsilon\to 0^+} \frac{1}{2\pi} \int_{0}^{\infty} \int_{\mathbb{R}^2} J_0(k|y-z|) S(z)\,  \, k\,  {\rm e}^{-\epsilon k}\, \mathrm{d}z \,\mathrm{d}k\\
       &= \lim_{\epsilon\to 0^+} \frac{1}{2\pi} \int_{\mathbb{R}^2} \frac{\epsilon}{(\epsilon^2+|y-z|^2)^{3/2}}  S(z) \mathrm{d}z\\
       &=S(y), \quad y\in\R^2,
\end{aligned}
\end{equation*}
where the last equality follows from the  Poisson kernel for the upper half-space in 2D \cite[Sec. 2.1, (15)]{Stein}
\begin{equation*}
    \delta(y-z)=\lim_{\epsilon\to 0^{+}} \frac{\epsilon}{2\pi(|y-z|^2+\epsilon^2)^{3/2}}. 
\end{equation*}
This completes the proof.
\end{proof}

In contrast to Theorem \ref{thm:near3D}, the analogous result of Theorem \ref{thm:near2D} for $\R^3$ is formulated in the following theorem.
\begin{theorem}\label{thm:near3D-2}
The source function $S\in C_c(\Omega)$ in $\R^3$ has a representation
	\begin{equation*}\label{eq:indicator3}
		S(y) = \frac{1}{2\pi^2} \int_{0}^{\infty}\int_{\Gamma} \Re\left[  \partial_{\nu_x}
		h_0^{(1)}(k|x-y|) \,  \overline{\mathscr{F}[p(x,\cdot)](k)} \right]\, k^2  \, \mathrm{d}s(x)\mathrm{d}k\, ,\quad y\in \R^3.
	\end{equation*}
\end{theorem}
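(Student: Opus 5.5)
The plan is to repeat the second half of the proof of Theorem \ref{thm:near3D}, starting from the frequency-domain quantity. We skip the distributional pairing step entirely, since the statement already involves $u(x,k)=\mathscr{F}[p(x,\cdot)](k)$ directly. Because $S\in C_c(\Omega)$, the representation \eqref{eq:solution-freq} holds for every $k>0$ and $x\in\Gamma$. Taking imaginary parts gives
\begin{equation*}
\Im(u(x,k))=\frac{k}{4\pi}\int_{\mathbb{R}^3} j_0(k|x-z|)S(z)\,\mathrm{d}z .
\end{equation*}
We then split $h_0^{(1)}=j_0+\mathrm{i}y_0$ and use that $j_0$ is real. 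This yields the pointwise identity
\begin{equation*}
\Re\big[\partial_{\nu_x}h_0^{(1)}\,\overline{u}\big]=\partial_{\nu_x}j_0(k|x-y|)\,u(x,k)-\mathrm{i}\,\Im(u(x,k))\,\partial_{\nu_x}h_0^{(1)}(k|x-y|),
\end{equation*}
which is the same identity used in the second line of the chain leading to \eqref{RHS-noepsilon}.

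Next, for each fixed $k$, substitute both expressions as integrals against $S(z)$. The factor $\frac{\mathrm{i}k}{4\pi}h_0^{(1)}=\Phi$ turns the second term into $-j_0(k|x-z|)\partial_{\nu_x}\Phi(x,y;k)$, and the first term becomes $\partial_{\nu_x}j_0(k|x-y|)\Phi(x,z;k)$. Interchanging the $\Gamma$- and $z$-integrals is harmless because $S$ has compact support in $\Omega$ and $\overline\Omega\subset B_R$, so all kernels are continuous there. Lemma \ref{I(y,z)=I(z,y)} then lets us swap $y$ and $z$ in the first term. The scalar Green representation formula for the entire Helmholtz solution $j_0(k|\cdot-z|)$ inside $B_R$ reduces the inner boundary integral to $j_0(k|y-z|)$, up to the constant that the normalisation $\frac{k}{4\pi}$ produces. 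Tracking the constants, the $k$-integrand of the right-hand side becomes
\begin{equation*}
\frac{1}{2\pi^2}\int_{\mathbb{R}^3}j_0(k|y-z|)S(z)\,\mathrm{d}z\,k^2,
\end{equation*}
which is exactly the expression in \eqref{RHS-noepsilon}.

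Finally, we interpret the outer $k$-integral as an Abel limit, inserting $\mathrm{e}^{-\epsilon k}$ and letting $\epsilon\to0^+$. By Fubini and the formula from \cite[3.944(11)]{Gradshteyn}, the inner integral becomes the 3D Poisson kernel $\frac{2\epsilon}{(|y-z|^2+\epsilon^2)^2}$ convolved with $S$. Since $S$ is continuous with compact support, the approximate-identity property gives $S(y)$ pointwise, so continuity suffices and no smoothness is needed.

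The main obstacle is justifying the $k$-integral. For fixed $y\in\Omega$ the integrand of the stated formula is only oscillatory in $k$, with size about $k^2\cdot k^{-1}\cdot|u|$, and for merely continuous $S$ the decay of $u$ in $k$ is not guaranteed. Two points therefore need care: the $k$-integral must be understood as $\lim_{\epsilon\to0}\int_0^\infty(\cdots)\mathrm{e}^{-\epsilon k}\,\mathrm{d}k$, as the paper already does implicitly in Theorem \ref{thm:near3D}, and the Green/reciprocity manipulations must be carried out under this regularisation. Everything before the limit is a finite integral for each fixed $\epsilon>0$, since the $\mathrm{e}^{-\epsilon k}$ factor dominates the polynomial growth in $k$. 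I would therefore insert the damping factor from the start, so that all interchanges of integration are clearly legitimate, and pass to the limit only at the Poisson-kernel step. The same argument also gives Theorem \ref{thm:near2D} with $J_0$ in place of $j_0$.
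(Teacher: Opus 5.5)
Your proposal is correct and follows the paper's argument. The paper gives no separate proof of this theorem: it is the second half of the proof of Theorem~\ref{thm:near3D}, namely the chain ending in \eqref{RHS-noepsilon} (imaginary part of $u$, Lemma~\ref{I(y,z)=I(z,y)}, Green's formula) followed by the Abel-regularised Poisson-kernel limit, and your constants check out. Inserting $\mathrm{e}^{-\epsilon k}$ from the start is, if anything, more careful than the paper, which tacitly replaces the possibly non-convergent $k$-integral by its Abel limit; as in the paper, though, the reciprocity and Green steps only cover $y\in B_R(0)$, not all of $\R^3$.
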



\section{Uniqueness with the far-field data}
In this section, we establish uniqueness results for the reconstruction of the source from far-field measurements. To this end, we first present a representation of the time-domain far-field pattern. 

\begin{theorem}\label{thm:far-field} 
Let $S\in C_c(\Omega)$.
Then the time-domain far-field pattern $ p^{\infty}$ admits the representation
\begin{equation}\label{eq:far_forward}
  p^{\infty}(\hat x, t)=
  	\begin{cases}
  &\displaystyle  \frac{1}{2\sqrt{2}\pi}\int_{\mathbb{R}^2} \frac{H(t+\hat x \cdot y) }{\sqrt{t+\hat x \cdot y}}S(y)\,  \mathrm{d}y, \quad \, d=2,\bigskip\\
  &\displaystyle\frac{1}{4\pi}\int_{\mathbb{R}^3} \delta(t+\hat x \cdot y) S(y)\, \mathrm{d}y,  \qquad \quad \, d=3.
  \end{cases}
\end{equation}
\end{theorem}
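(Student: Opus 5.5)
The plan is to start from the representation \eqref{eq:radiating-field}, $p(x,t)=\int_{\mathbb{R}^d}G(x,y;t)S(y)\,\mathrm{d}y$, evaluate it at time $t+|x|$, multiply by $|x|^{\frac{d-1}{2}}$, and pass to the limit $|x|\to\infty$ with $\hat x$ fixed. Since $S$ is supported in the bounded set $\Omega$, we only need $y$ in a compact set. There we use the standard expansion
\[
|x-y| = |x| - \hat x\cdot y + \mathcal{O}(|x|^{-1}),
\]
which is uniform in $y\in\overline{\Omega}$. The first-order correction is what produces the shift $t+\hat x\cdot y$ in \eqref{eq:far_forward}.

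For $d=3$, write $p(x,t+|x|)=\frac{1}{4\pi}\int \frac{\delta(t+|x|-|x-y|)}{|x-y|}S(y)\,\mathrm{d}y$ and pair it with a test function $\varphi(t)$. This gives
\[
\frac{1}{4\pi}\int \frac{\varphi(|x-y|-|x|)}{|x-y|}\,S(y)\,\mathrm{d}y .
\]
Multiplying by $|x|$ and using $|x|/|x-y|\to1$ together with $|x-y|-|x|\to-\hat x\cdot y$, uniformly on $\overline\Omega$, dominated convergence yields $\frac{1}{4\pi}\int\varphi(-\hat x\cdot y)S(y)\,\mathrm{d}y$. This is exactly the distributional pairing of $\frac{1}{4\pi}\int\delta(t+\hat x\cdot y)S(y)\,\mathrm{d}y$ with $\varphi$. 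The $\mathcal{O}(|x|^{-1})$ remainder in the expansion is consistent with the stated far-field asymptotics.

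For $d=2$, the kernel is $\frac{H(t+|x|-|x-y|)}{2\pi\sqrt{(t+|x|-|x-y|)(t+|x|+|x-y|)}}$. The second factor satisfies $t+|x|+|x-y|=2|x|(1+\mathcal{O}(|x|^{-1}))$, so multiplying by $|x|^{1/2}$ gives the pointwise limit
\[
\frac{H(t+\hat x\cdot y)}{2\pi\sqrt{2}\sqrt{t+\hat x\cdot y}} ,
\]
which matches the constant $\frac{1}{2\sqrt2\pi}$ in \eqref{eq:far_forward}.

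The main obstacle is justifying the interchange of limit and integral near the singular set $t+\hat x\cdot y=0$, where the kernel blows up like an inverse square root and the Heaviside cutoff moves with $|x|$. I would handle this in one of two ways:
\begin{itemize}
\item interpret the identity distributionally in $t$, pairing with $\varphi$ and substituting $s=t+|x|-|x-y|$ so that the moving cutoff becomes fixed, and then apply dominated convergence;
\item or argue pointwise in $t$, splitting $\Omega$ into a thin layer $\{|t+\hat x\cdot y|<\eta\}$, whose contribution is bounded by $C\|S\|_\infty\sqrt{\eta}$ because $s^{-1/2}$ is integrable in the direction $\hat x$ uniformly in $|x|$, and its complement, where convergence is uniform.
\end{itemize}
Continuity of $S$ suffices for both approaches.
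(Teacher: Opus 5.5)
Your proposal is correct and follows essentially the same route as the paper: you insert the expansion $|x-y|=|x|-\hat x\cdot y+\mathcal{O}(|x|^{-1})$ into the representation \eqref{eq:radiating-field} at time $t+|x|$ and pass to the limit. The paper writes $(t+|x|)^2-|x-y|^2=2|x|(t+\hat x\cdot y)+t^2-|y|^2$ exactly, where you factor it as $(t+|x|-|x-y|)(t+|x|+|x-y|)$; it sketches only $d=2$ and does not justify exchanging limit and integral, so your distributional treatment of $d=3$ and your thin-layer or dominated-convergence argument near $t+\hat x\cdot y=0$ fill in details the paper leaves implicit.
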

\begin{proof}
	
We consider only the two-dimensional case, as the three-dimensional case can be treated by a similar argument.

From
\begin{equation*}
	\begin{aligned}
  |x-y|&
  =|x|\sqrt{1-2\frac{\hat{x}\cdot y}{|x|}+\left(\frac{|y|}{|x|}\right)^2}
  =|x|-\hat{x}\cdot y+\mathcal{O}\left( \frac{1}{|x|}\right), \quad |x|\to \infty
  \end{aligned}
\end{equation*}
we derive 
\begin{equation*}
	\sqrt{(t+|x|)^2-|x-y|^2}=\sqrt{2t|x|+2x\cdot y+t^2-|y|^2}=\sqrt{2|x|}\sqrt{t+\hat{x}\cdot y+ \mathcal{O}(1/|x|)}.
\end{equation*}
Inserting the above expansion into \eqref{eq:radiating-field}, the theorem follows. 
\end{proof}

\begin{remark}
The measurement time for the near field 
$p(x,t)$ starts naturally from zero, corresponding to the activation of the source. In contrast, for the far-field pattern $p^{\infty}(\hat{x},t)$, the effective time reference is shifted by the propagation delay, so it starts from $\min_{y\in \Omega}\hat{x}\cdot y$ for any fixed observation direction $\hat x$. 
\end{remark}

\begin{remark}[Translation Property]
It is noted that the time-domain far-field pattern satisfies the translation property: if $\tilde{S}(\cdot) = S(\cdot - z_0)$ for some fixed vector $z_0 \in \mathbb{R}^d$, then 
\begin{equation*}
    p_{\tilde{S}}^{\infty}(\hat{x}, t) = p_{S}^{\infty}(\hat{x}, t + \hat{x} \cdot z_0).
\end{equation*}
\end{remark}

Next, we show the one-to-one correspondence between the time-domain and frequency-domain far-field patterns via the Fourier transform. To this end, we first recall the frequency-domain far field pattern of the Helmholtz equation. 
From the asymptotic behavior of the Hankel function, the solution to the Helmholtz equation satisfies
\begin{equation*}
u(x, k) = \frac{\mathrm{e}^{{\rm i}k|x|}}{|x|^{\frac{d-1}{2}}} \left\{ u_\infty(\hat{x}, k) + \mathcal{O}\left( \frac{1}{|x|} \right) \right\}, \quad |x| \to \infty,
\end{equation*}
uniformly for all directions $\hat{x} = x/|x|$, where the frequency-domain far-field pattern $u_\infty$  is defined by
\begin{equation}\label{eq:far-freq1}
		u_\infty(\hat{x}, k)=\frac{{\rm e}^{{\rm i}\frac{\pi}{4}}}{\sqrt{8k\pi}} \left( {\rm e}^{-{\rm i}\frac{\pi}{4}} \sqrt{\frac{k}{2\pi}} \right)^{d-2}\int_{\mathbb{R}^d} \mathrm{e}^{-\mathrm{i} k \hat x\cdot y} S(y)\, \mathrm{d}y,  \quad d=2, 3.
\end{equation}

\begin{theorem} 
Let $p^{\infty}(\hat{x},t)$ and $u^{\infty}(\hat{x}, k)$ denote the far-field patterns defined in \eqref{eq:far_forward} and \eqref{eq:far-freq1}, respectively. 
Then 
\begin{equation*}
u^{\infty}(\hat x, k)=\mathscr{F}[p^{\infty}(\hat x,t)](k),\quad k\in \mathbb{R}^+.
\end{equation*}
\end{theorem}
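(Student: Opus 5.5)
Insert the representation \eqref{eq:far_forward} of Theorem \ref{thm:far-field} into $\mathscr{F}[p^{\infty}(\hat x,\cdot)](k)$. Interchange the $t$- and $y$-integrals using Fubini, which needs care in 2D. Then compute the one-dimensional Fourier transform of the kernel explicitly and compare with \eqref{eq:far-freq1}.

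For $d=3$ the computation is direct. For fixed $y$ the kernel $\frac{1}{4\pi}\delta(t+\hat x\cdot y)$ has Fourier transform $\frac{1}{4\pi}\mathrm{e}^{-\mathrm{i}k\hat x\cdot y}$. Pairing with $S\in C_c(\Omega)$ is legitimate, since $p^\infty(\hat x,\cdot)$ is the pushforward of the compactly supported measure $S\,\mathrm{d}y$ under $y\mapsto -\hat x\cdot y$, a compactly supported function or distribution in $t$. This gives
\[
\mathscr{F}[p^\infty(\hat x,\cdot)](k)=\frac{1}{4\pi}\int_{\R^3}\mathrm{e}^{-\mathrm{i}k\hat x\cdot y}S(y)\,\mathrm{d}y .
\]
The constant in \eqref{eq:far-freq1} for $d=3$ is $\frac{\mathrm{e}^{\mathrm{i}\pi/4}}{\sqrt{8k\pi}}\mathrm{e}^{-\mathrm{i}\pi/4}\sqrt{\frac{k}{2\pi}}=\frac{1}{\sqrt{16\pi^2}}=\frac{1}{4\pi}$, so the two expressions agree.

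For $d=2$ we need the transform of $g(t)=\frac{H(t)}{\sqrt{t}}$, shifted by $-\hat x\cdot y$. The function $g$ is not in $L^1$ because of its $t^{-1/2}$ tail; this is the main obstacle. I would treat it by damping. Interpret the transform as $\lim_{\epsilon\to0^+}\int g(t)\mathrm{e}^{\mathrm{i}kt-\epsilon t}\,\mathrm{d}t$, equivalently as the tempered-distribution transform. Use the Gamma integral
\[
\int_0^\infty t^{-1/2}\mathrm{e}^{-(\epsilon-\mathrm{i}k)t}\,\mathrm{d}t=\Gamma(1/2)(\epsilon-\mathrm{i}k)^{-1/2},
\]
taken on the principal branch. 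As $\epsilon\to0^+$, for $k>0$ this tends to $\sqrt{\pi}\,(-\mathrm{i}k)^{-1/2}=\sqrt{\pi/k}\,\mathrm{e}^{\mathrm{i}\pi/4}$. The shift $t\mapsto t+\hat x\cdot y$ contributes the factor $\mathrm{e}^{-\mathrm{i}k\hat x\cdot y}$.

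The Fubini interchange in the damped integral is justified because $S$ has compact support and $|g(t+\hat x\cdot y)\mathrm{e}^{-\epsilon t}|$ is integrable in $(t,y)$ for each $\epsilon>0$. The limit $\epsilon\to0$ can then be passed inside the $y$-integral by dominated convergence. Here the bound $|(\epsilon-\mathrm{i}k)^{-1/2}|\le k^{-1/2}$ holds uniformly, and the extra factor $\mathrm{e}^{\epsilon\hat x\cdot y}$ stays bounded on the support of $S$.

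Collecting the constants gives
\[
\frac{1}{2\sqrt2\pi}\sqrt{\frac{\pi}{k}}\,\mathrm{e}^{\mathrm{i}\pi/4}=\frac{\mathrm{e}^{\mathrm{i}\pi/4}}{\sqrt{8\pi k}},
\]
which matches \eqref{eq:far-freq1} for $d=2$. Finally, I would remark that the damped definition coincides with the tempered-distribution Fourier transform, which is consistent with the way the 2D tail was handled in Theorem \ref{thm:near2D}.
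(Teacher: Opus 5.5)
Your proposal is correct and follows essentially the same route as the paper: substitute \eqref{eq:far_forward}, use the time shift to extract $\mathrm{e}^{-\mathrm{i}k\hat x\cdot y}$, evaluate $\mathscr{F}[\delta](k)=1$ in 3D and the transform of $H(t)/\sqrt{t}$ via the Gamma integral in 2D, and match the constants. Your damping factor $\mathrm{e}^{-\epsilon t}$, together with Fubini and dominated convergence, makes rigorous two steps the paper uses without comment: the conditionally convergent integral $\int_0^\infty \mathrm{e}^{\mathrm{i}kt}t^{-1/2}\,\mathrm{d}t$ and the interchange of integrals (the paper uses the same regularization later, in the proof of Theorem \ref{thm:far23D}).
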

\begin{proof}
Using the time-shift property of the Fourier transform 
\begin{equation*}
	\mathscr{F}[g(t + \tau)](k) = \mathrm{e}^{-\rmi k\tau}\, \mathscr{F}[g(t)](k),
\end{equation*}
together with the formula
\begin{equation*}
	\int_0^\infty \frac{{\rm e}^{{\rm i}kt}}{\sqrt{t}}\,\, {\rm d}t  = \Gamma\left(\frac{1}{2}\right) {\rm e}^{\rmi \frac{\pi}{4}}  k^{-1/2}=\frac{\sqrt{\pi}\mathrm{e}^{\mathrm{i}\pi/4}}{\sqrt{k}}, \quad k\in \mathbb{R}^+,
\end{equation*}
we derive in the two-dimensional case that
\begin{equation*}
	\begin{aligned}
		\mathscr{F}[p^{\infty}(\hat x,t)](k)
		&=\mathscr{F}\left[\frac{1}{2\sqrt{2}\pi}\int_{\mathbb{R}^2} \frac{H(t+\hat x \cdot y) }{\sqrt{t+\hat x \cdot y}}S(y)\, \mathrm{d}y \right](k)\\
		&=\frac{1}{2\sqrt{2}\pi}\int_{\mathbb{R}^2}   \mathrm{e}^{-\mathrm{i} k \hat x\cdot y}  \mathscr{F}\left[ \frac{H(t) }{\sqrt{t}} \right](k) \, S(y) \,\mathrm{d}y\\
		&=\frac{1}{2\sqrt{2}\pi}\int_{\mathbb{R}^2}   \mathrm{e}^{-\mathrm{i} k \hat x\cdot y}  	\int_0^\infty \frac{{\rm e}^{{\rm i}kt}}{\sqrt{t}}\,\, {\rm d}t \, S(y) \,\mathrm{d}y\\
		&=\frac{\mathrm{e}^{\mathrm{i}\pi/4}}{\sqrt{8\pi k}}\int_{\mathbb{R}^2} \mathrm{e}^{-\mathrm{i} k \hat x\cdot y} S(y)\, \mathrm{d}y =u^{\infty}(\hat x, k).
	\end{aligned}
\end{equation*}

For the three-dimensional case, using the fact that $\mathscr{F} [\delta(t)](k)=1$, we obtain 
\begin{equation*}
  \begin{aligned}
  \mathscr{F}[p^{\infty}(\hat x,t)](k)
  &=\mathscr{F}\left[\frac{1}{4\pi}\int_{\mathbb{R}^3} \delta(t+\hat x \cdot y) S(y)\, \mathrm{d}y\right](k)\\
   &=\frac{1}{4\pi}\int_{\mathbb{R}^3}  \mathrm{e}^{-\mathrm{i} k \hat x\cdot y}   \mathscr{F}\left[ \delta(t) \right](k)\, S(y)\, \mathrm{d}y\\
   &=\frac{1}{4\pi}\int_{\mathbb{R}^3}  \mathrm{e}^{-\mathrm{i} k \hat x\cdot y}   S(y) \mathrm{d}y =u^{\infty}(\hat x, k),
  \end{aligned}
\end{equation*}
which completes the proof. 
\end{proof}

We now present the uniqueness results for recovering the source function $S$ from the far-field pattern $\mathcal{M}_{\text{far}}$.

\begin{theorem}\label{thm:far23D} 
	For $S\in C_c(\Omega)$,  the following identity holds:
\begin{equation}\label{eq:indicator-far23D}
    S(y) =
     \frac{1}{2\pi^2} \left(\frac{2}{\mathrm{i}\pi}\right)^{\frac{3-d}{2}} \int_{\mathbb{R}^d}  \int_{-\infty}^{\infty} p^{\infty}\left(\hat x, t-\hat x\cdot y \right) |x|^{\frac{3-d}{2}}\mathrm{e}^{\mathrm{i}|x|t } \, \mathrm{d}t\,\mathrm{d}x, \quad y\in \R^d.
\end{equation} 
\end{theorem}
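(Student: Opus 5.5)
The plan is to reduce the spacetime integral to a Fourier inversion. I will use the preceding theorem, which identifies $\mathscr{F}[p^\infty(\hat x,\cdot)](k)$ with the frequency-domain far field $u^\infty(\hat x,k)$, together with the explicit formula \eqref{eq:far-freq1}. Write $x=k\hat x$ in polar coordinates, with $k=|x|>0$ and $\mathrm{d}x=k^{d-1}\,\mathrm{d}k\,\mathrm{d}s(\hat x)$. For fixed $\hat x$ and $k$, the inner time integral is then a Fourier transform.

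By the time-shift property recalled in the previous proof,
\begin{equation*}
\int_{-\infty}^{\infty} p^\infty(\hat x,t-\hat x\cdot y)\,\mathrm{e}^{\mathrm{i}kt}\,\mathrm{d}t=\mathrm{e}^{\mathrm{i}k\hat x\cdot y}\,u^\infty(\hat x,k).
\end{equation*}
Substituting \eqref{eq:far-freq1} gives the prefactor. For $d=3$ it is $\mathrm{e}^{\mathrm{i}\pi/4}\mathrm{e}^{-\mathrm{i}\pi/4}(k/2\pi)^{1/2}/\sqrt{8k\pi}=1/(4\pi)$. For $d=2$ it is $\mathrm{e}^{\mathrm{i}\pi/4}/\sqrt{8k\pi}$.

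Next I check that the weights in \eqref{eq:indicator-far23D} cancel these factors. For $d=2$, the factor $(2/(\mathrm{i}\pi))^{1/2}|x|^{1/2}$ multiplied by $\mathrm{e}^{\mathrm{i}\pi/4}/\sqrt{8\pi k}$ gives $\mathrm{e}^{-\mathrm{i}\pi/4}\mathrm{e}^{\mathrm{i}\pi/4}\sqrt{2/\pi}/\sqrt{8\pi}=1/(2\pi)$. So in both dimensions the right-hand side becomes
\begin{equation*}
\frac{1}{2\pi^2}\cdot\frac{c_d}{1}\int_{\mathbb{R}^d}\int_{\mathbb{R}^d}\mathrm{e}^{\mathrm{i}x\cdot(y-z)}S(z)\,\mathrm{d}z\,\mathrm{d}x,
\end{equation*}
with $c_3=1/(4\pi)$ and $c_2=1/(2\pi)$. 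The total constants are then $1/(8\pi^3)=(2\pi)^{-3}$ for $d=3$ and $1/(4\pi^3)$ for $d=2$. The expected constant for $d=2$ is $(2\pi)^{-2}=1/(4\pi^2)$, so I will recheck the normalization there, possibly using the branch $\sqrt{\mathrm{i}}=\mathrm{e}^{\mathrm{i}\pi/4}$, and absorb any discrepancy by careful bookkeeping. Once the constant is confirmed to be $(2\pi)^{-d}$, the expression is the Fourier inversion formula $(2\pi)^{-d}\int_{\mathbb{R}^d}\hat S(x)\,\mathrm{e}^{\mathrm{i}x\cdot y}\,\mathrm{d}x=S(y)$, where $\hat S(x)=\int S(z)\mathrm{e}^{-\mathrm{i}x\cdot z}\,\mathrm{d}z$.

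The main obstacle is rigor rather than algebra. For $S\in C_c(\Omega)$ only, $\hat S$ need not be integrable, and $p^\infty$ is a distribution for $d=3$ and has an $L^1_{\mathrm{loc}}$ tail for $d=2$. I would therefore interpret the outer integral as $\lim_{\epsilon\to0^+}$ with a damping factor $\mathrm{e}^{-\epsilon|x|}$, exactly as in the proofs of Theorem \ref{thm:near3D} and Theorem \ref{thm:near2D}. Applying Fubini to the damped integral gives $\int_{\mathbb{R}^d}\mathrm{e}^{-\epsilon|x|}\mathrm{e}^{\mathrm{i}x\cdot(y-z)}\,\mathrm{d}x$, which is $(2\pi)^d$ times the Poisson kernels already quoted from \cite{Stein}. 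That is, it equals $8\pi\epsilon/(\epsilon^2+|y-z|^2)^2$ in 3D and $2\pi\epsilon/(\epsilon^2+|y-z|^2)^{3/2}$ in 2D. The limit $\epsilon\to0^+$ then yields $S(y)$ by continuity of $S$.

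The time integrals themselves are understood as the distributional Fourier transforms from the previous theorem. For $d=2$ this means the oscillatory integral of $H(t)/\sqrt t$, which is valid for $k>0$.
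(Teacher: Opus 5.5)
You take essentially the paper's route. After the shift $t\mapsto t+\hat x\cdot y$, you evaluate the time integral explicitly and read the result as Fourier inversion. You quote the preceding theorem $\mathscr{F}[p^{\infty}(\hat x,\cdot)]=u^{\infty}(\hat x,\cdot)$, whereas the paper recomputes that transform with a damping $\mathrm{e}^{-\eta t}$ and the Gamma integral. Your Abel factor $\mathrm{e}^{-\epsilon|x|}$ on the outer integral, reduced to the Poisson kernels already used in the near-field proofs, is more careful than the paper. The paper simply asserts $(2\pi)^{-d}\iint\mathrm{e}^{-\mathrm{i}x\cdot(z-y)}S(z)\,\mathrm{d}z\,\mathrm{d}x=S(y)$, although $\widehat{S}$ need not be integrable for $S\in C_c(\Omega)$.

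The genuine gap is the constant for $d=2$, and it cannot be absorbed by bookkeeping. Your arithmetic is right. With the prefactor as printed, $\frac{1}{2\pi^2}\bigl(\frac{2}{\mathrm{i}\pi}\bigr)^{1/2}=\mathrm{e}^{-\mathrm{i}\pi/4}/\sqrt{2\pi^5}$, the right-hand side equals $\frac{1}{4\pi^3}(2\pi)^2S(y)=S(y)/\pi$. The other branch of $\sqrt{\mathrm{i}}$ only flips the sign, so no choice of branch can remove a factor of modulus $\pi$.

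The statement is misprinted in two dimensions. The identity holds with $\bigl(\frac{2\pi}{\mathrm{i}}\bigr)^{\frac{3-d}{2}}$ in place of $\bigl(\frac{2}{\mathrm{i}\pi}\bigr)^{\frac{3-d}{2}}$, i.e. with prefactor $\mathrm{e}^{-\mathrm{i}\pi/4}/\sqrt{2\pi^3}$. In your notation this gives $c_2=1/2$ and total constant $(2\pi)^{-2}$. This is exactly the constant the paper writes in the first display of its two-dimensional computation, silently departing from its own statement.

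So do not promise that the constant ``will be confirmed to be $(2\pi)^{-d}$''. State explicitly that the literal $d=2$ formula returns $S(y)/\pi$, and prove the corrected identity. With that change your argument is complete, and your $d=3$ case is fine as written.
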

\begin{proof}
We first consider the three-dimensional case. Since 
	\begin{equation*}
		p^{\infty}(\hat x, t)=\frac{1}{4\pi}\int_{\mathbb{R}^3} \delta(t+\hat x \cdot z) S(z)\, \mathrm{d}z,
	\end{equation*}
substituting it into the right-hand side of   \eqref{eq:indicator-far23D}, we have
\begin{equation*}
		\text{RHS}=\frac{1}{2\pi^2}  \int_{\mathbb{R}^3} \int_{-\infty}^{\infty}  p^{\infty}\left(\hat x, t-\hat x\cdot y \right) \mathrm{e}^{\mathrm{i}|x|t } \, \mathrm{d}t\, \mathrm{d}x,\quad y\in\R^3, 
	\end{equation*}
which implies that
	\begin{equation*}
		\begin{aligned}
			 \text{RHS}&=\frac{1}{2\pi^2}  \int_{\mathbb{R}^3} \int_{-\infty}^{\infty}  \left(\frac{1}{4\pi}\int_{\mathbb{R}^3} \delta(t+ \hat x \cdot z - \hat x \cdot y) S(z)\, \mathrm{d}z\right) \mathrm{e}^{\mathrm{i}|x|t } \, \mathrm{d}t\, \mathrm{d}x\\
			&=\frac{1}{(2\pi)^3}  \int_{\mathbb{R}^3} \int_{\mathbb{R}^3}  \mathrm{e}^{-\mathrm{i}x\cdot ( z- y) } S(z)\, \mathrm{d}z\, \mathrm{d}x\\
            &=S(y),\quad y\in\R^3.
		\end{aligned}
	\end{equation*}


Next, we consider the two-dimensional case.
	Substituting the far-field pattern \eqref{eq:far_forward} into the right-hand side of \eqref{eq:indicator-far23D}:
\begin{align*}
	\text{RHS}&:=\frac{\mathrm{e}^{-\mathrm{i}\pi/4}}{\sqrt{2\pi^3}}  \int_{\mathbb{R}^2}\int_{-\infty}^{\infty}   \sqrt{|x|}  \, p^{\infty}\left(\hat x, t-\hat x\cdot y \right) \mathrm{e}^{\mathrm{i}|x|t } \, \mathrm{d}t\, \mathrm{d}x\\
    &=\frac{\mathrm{e}^{-\mathrm{i}\pi/4}}{\sqrt{2\pi^3}}    \int_{\mathbb{R}^2} \sqrt{|x|} \lim_{\eta \to 0^+} \int_{-\infty}^{\infty}  \left(\frac{1}{2\sqrt{2}\pi}\int_{\mathbb{R}^2} \frac{H[t+\hat x \cdot (z - y)] }{\sqrt{t+\hat x \cdot (z -y)}}S(z)\, \mathrm{d}z \right) \mathrm{e}^{\mathrm{i}|x|t } {\rm e}^{-\eta t}\, \mathrm{d}t\, \mathrm{d}x.
\end{align*}

Furthermore, by Fubini’s theorem and the change of variables $\tau:=t+\hat x \cdot (z- y)$, the above expression can be rewritten as
\begin{equation*}
 \text{RHS}=  \frac{\mathrm{e}^{-\mathrm{i}\pi/4}}{4\pi^{\frac{5}{2}}}  \int_{\mathbb{R}^2} 
 \sqrt{|x|}  \lim_{\eta \to 0^+} \int_{\mathbb{R}^2}  
 \mathrm{e}^{-\mathrm{i} x \cdot (z - y)}  \mathrm{e}^{\eta \hat{x} \cdot (z - y)}  \left(\int_{-\infty}^{\infty}  \frac{H(\tau) }{\sqrt{\tau}} \mathrm{e}^{\mathrm{i}|x|\tau}  \mathrm{e}^{-\eta \tau} \mathrm{d}\tau \right)   S(z)\, \mathrm{d}z \, \mathrm{d}x.   
\end{equation*}
Since $H(\tau)$ vanishes for $\tau < 0$, we use the integral representation of the Gamma function \cite[Sec. 6.1, (6.1.1)]{Abramowitz1965}
\begin{equation*}
    \Gamma(z)=\alpha ^{z}\int_0^{\infty} \frac{{\rm e}^{-\alpha t}}{t^{1-z}}  {\rm d}t, \quad  \Re z>0, \ \Re \alpha >0, 
\end{equation*}
and the fact that $\Gamma(1/2)=\sqrt{\pi}$ to derive that 
\begin{equation*}
\begin{aligned}
 \text{RHS}
&= \frac{\mathrm{e}^{-\mathrm{i}\pi/4}}{4\pi^{\frac{5}{2}}}  \int_{\mathbb{R}^2} 
 \sqrt{|x|} \int_{\mathbb{R}^2}  
 \mathrm{e}^{-\mathrm{i} x \cdot (z - y)}  \mathrm{e}^{\eta \hat{x} \cdot (z - y)}   \lim_{\eta \to 0^+} \left(\int_0^{\infty} \frac{\mathrm{e}^{-(\eta-\mathrm{i}|x|) \tau}}{\sqrt{\tau}} \, \mathrm{d}\tau \right)   S(z)\, \mathrm{d}z \, \mathrm{d}x\\
 &= \frac{\mathrm{e}^{-\mathrm{i}\pi/4}}{4\pi^{\frac{5}{2}}}  \int_{\mathbb{R}^2} 
 \sqrt{|x|} \int_{\mathbb{R}^2}  
 \mathrm{e}^{-\mathrm{i} x \cdot (z - y)}  \mathrm{e}^{\eta \hat{x} \cdot (z - y)}   \lim_{\eta \to 0^+} \left(\Gamma(1/2)(\eta-{\rm i} |x|)^{-1/2} \right)   S(z)\, \mathrm{d}z \, \mathrm{d}x\\
 &=  \frac{\mathrm{e}^{-\mathrm{i}\pi/4}}{4\pi^{\frac{5}{2}}}  \int_{\mathbb{R}^2} 
 \sqrt{|x|}  \int_{\mathbb{R}^2}  
 \mathrm{e}^{-\mathrm{i} x \cdot (z - y)}  \sqrt{\pi}|x|^{-1/2} {\rm e}^{{\rm i}\pi/4} S(z)\, \mathrm{d}z \, \mathrm{d}x\\
 &=\frac{1}{(2\pi)^2}  \int_{\mathbb{R}^2} \int_{\mathbb{R}^2}  \mathrm{e}^{-\mathrm{i}x\cdot ( z- y) } S(z)\, \mathrm{d}z\, \mathrm{d}x\\
 &=S(y),\quad y\in\R^2.
 \end{aligned}
\end{equation*}
This completes the proof.
\end{proof}

\section{Time-domain direct sampling method}
Theorem \ref{thm:near3D} suggests that the indicator
\begin{equation*}
	\mathcal{I}_{\text{near}}(y) 
     =2  \int_{\Gamma} 
  \langle \partial_{\nu_x}\partial_t G(x,y; \cdot),  p(x,\cdot) \rangle
		\,\mathrm{d}s(x), \quad y\in \R^3 
\end{equation*}
can be used to quantitatively reconstruct the source function. However, due to the singular nature of the derivatives of the fundamental solution $G$, this indicator must be modified.
A straightforward calculation shows that
\begin{equation*}
	\partial_{\nu_x} \partial_t G(x-y,t) =- \frac{(x-y)\cdot \nu_x}{4\pi |x-y|^2} \left( \delta^{''}(t - |x-y|) + \frac{\delta'(t-|x-y|)}{|x-y|} \right).
\end{equation*}
Accordingly,  the above indicator can be rewritten as
\begin{equation}\label{eq:near3D_time}
	\mathcal{I}^{(1)}_{\mathrm{near}}(y) = \int_{\Gamma} \frac{(x-y)\cdot \nu_x}{2\pi |x-y|^2} 
	\left( \frac{\partial_t p(x,|x-y|)}{|x-y|} - \partial_{tt} p(x,|x-y|)  \right) \mathrm{d}s(x), \quad y\in \R^3.
\end{equation}
For numerical implementation, given noisy measurements $p\in \mathcal{M}_{\text{near}}$, 
we may apply frequency filtering to truncate high-frequency components and smooth the signal, from which the first and second time derivatives $\partial_t p$ and $\partial_{tt} p$ can then be evaluated. As an alternative, we may first compute the Fourier transform of the noisy data $p\in\mathcal{M}_{\text{near}}$ and then, motivated by the explicit representations derived in Theorems \ref{thm:near2D}–\ref{thm:near3D-2}, employ the following indicator:
\begin{equation}\label{I-near}
	\mathcal{I}^{(2)}_{\text{near}}(y) 
    =\begin{cases}
  &\!\!\!\!\!\!\displaystyle \frac{1}{2\pi^2} \int_{0}^{\infty}\int_{\Gamma} \Re\left[  \partial_{\nu_x}
		h_0^{(1)}(k|x-y|) \,  \overline{\mathscr{F}[p(x,\cdot)](k)} \right]\, k^2  \, \mathrm{d}s(x)\mathrm{d}k, \quad y\in \R^3, \medskip\\
  &\!\!\!\!\displaystyle\frac{1}{2\pi} \int_0^{\infty} \int_{\Gamma} \Re\left[\partial_{\nu_x}
   		H_0^{(1)}(k|x-y|)\overline{\mathscr{F}[p(x,\cdot)](k)}\right]\, k\,\mathrm{d}s(x)\,\mathrm{d}k,\quad y\in \R^2.
  \end{cases}
\end{equation} 

Finally, inspired by Theorem \ref{thm:far23D}, we introduce the following indicator based on the time-domain far-field patterns $p^\infty\in\mathcal{M}_{\text{far}}$:
\begin{equation}\label{I-far}
 	\mathcal{I}_{\text{far}}(y)=
     \frac{1}{2\pi^2} \left(\frac{2}{\mathrm{i}\pi}\right)^{\frac{3-d}{2}} \int_{\mathbb{R}^d}  \int_{-\infty}^{\infty} p^{\infty}\left(\hat x, t-\hat x\cdot y \right) |x|^{\frac{3-d}{2}}\mathrm{e}^{\mathrm{i}|x|t } \, \mathrm{d}t\,\mathrm{d}x, \quad y\in \R^d.
 \end{equation}
Note that the evaluation of $\mathcal{I}_{\text{far}}$ involves a volume integral. In subsequent numerical experiments, the observation directions are sampled on an equidistantly spaced square or cubic grid.

We consider four groups of numerical tests, referred to as  {\bf Near2D, Near3D, Far2D}, and {\bf Far3D}. 

\subsection{Synthetic data generation}
To generate synthetic near field data, we employ the finite element method to solve the forward problem of \eqref{eq:main}. 
 
For synthetic far-field data, we make use of the representation \eqref{eq:far_forward} given in Theorem \ref{thm:far-field}.
In $\R^3$, from formula \eqref{eq:far_forward}, the far-field pattern can be rewritten as
\begin{equation*}
	p^{\infty}(\hat{x}, t) = \frac{1}{4\pi} \int_{\hat{x} \cdot y = -t} S(y)\,  \mathrm{d}\sigma(y), \quad \hat{x}\in \mathbb{S}^{2}.
\end{equation*}
In $\R^2$, we have
\begin{equation*}
	\begin{aligned}
		p^{\infty}(\hat{x}, t)
		= \frac{1}{2\sqrt{2}\pi}
		\int_{\hat{x} \cdot y > -t}
		\frac{1}{\sqrt{t + \hat{x} \cdot y}}
		S(y)\,\mathrm{d}y, \quad \hat{x}\in \mathbb{S}^{1}. 
	\end{aligned}
\end{equation*}
To remove the singularity at $t + \hat{x} \cdot y = 0$, we introduce the change of variables
\begin{equation*}
	s := \hat{x} \cdot y, 
	\qquad 
	r := \hat{x}^{\perp} \cdot y,
\end{equation*}
where $\hat{x}^{\perp}$ is the unit vector perpendicular to $\hat{x}$. 
This gives the representation
\begin{equation*}
 y = s\hat{x} + r\hat{x}^{\perp},
\end{equation*}
and the far-field pattern can  be rewritten as
\begin{align*}
p^{\infty}(\hat{x}, t)
&= \frac{1}{2\sqrt{2}\pi}
	\int_{s>-t} \int_{-\infty}^{\infty}
	\frac{1}{\sqrt{t + s}}
	S(s,r)\,\mathrm{d}r\,\mathrm{d}s\\
&= \frac{1}{\sqrt{2}\pi}
	\int_{0}^{\infty} \int_{-\infty}^{\infty}
	S(v^2 - t, r)\,\mathrm{d}r\,\mathrm{d}v, \quad \hat{x}\in \mathbb{S}^{1}.
\end{align*}


To test the robustness of the proposed method, we add signal-to-noise ratio (SNR) noise to the time-dependent measurements, i.e.,  
\begin{equation*}
	m^{\delta}=m+\sqrt{\frac{P_{\text{signal}}(m)}{10^{\mathrm{SNR}/10}} } \times Z,\quad m=p(x,\cdot),\, p^\infty(\hat{x},\cdot),
\end{equation*}
where $P_{signal}(m)=\mathbb{E}(|m|^2)$ stands for the power of the signal, with $\mathbb{E}$ the expectation operator over time $t$, and
 $Z\sim \mathcal{N}(0,1)$ is a standard Gaussian random variable. 

\begin{figure}
	\centering
	\subfigure[signal without noise]{\includegraphics[width=0.45\textwidth]{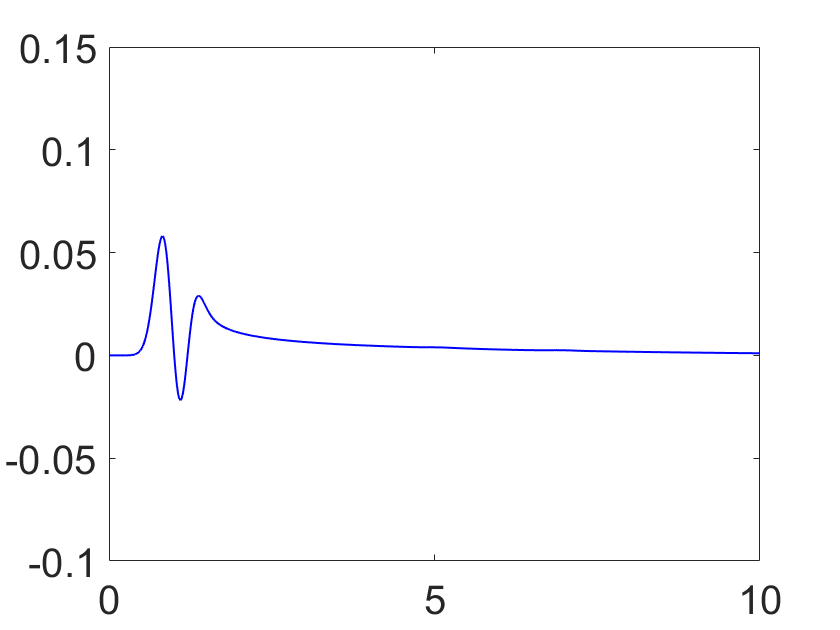}}
	\subfigure[signal with $\text{SNR}=15$ dB]{\includegraphics[width=0.45\textwidth]{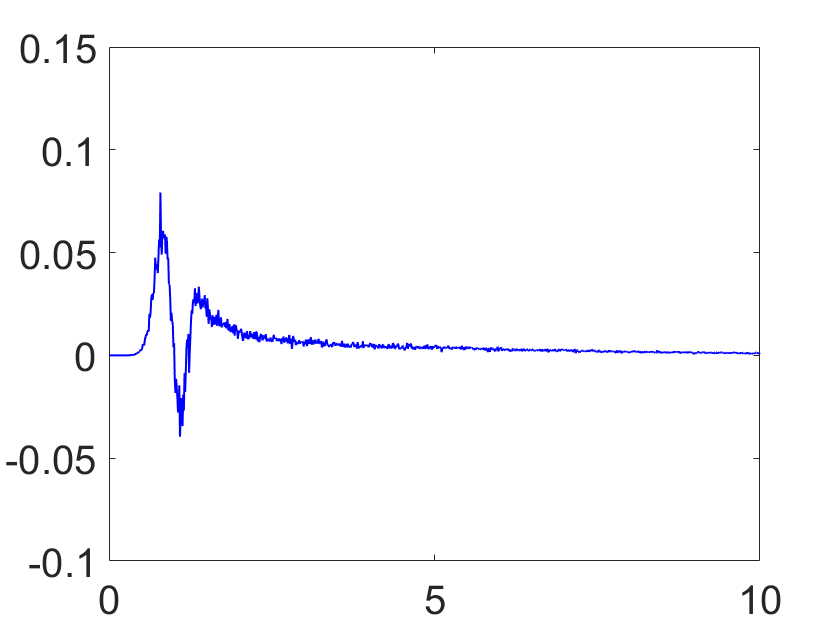}}\\
	\subfigure[signal with $\text{SNR}=5$ dB]{\includegraphics[width=0.45\textwidth]{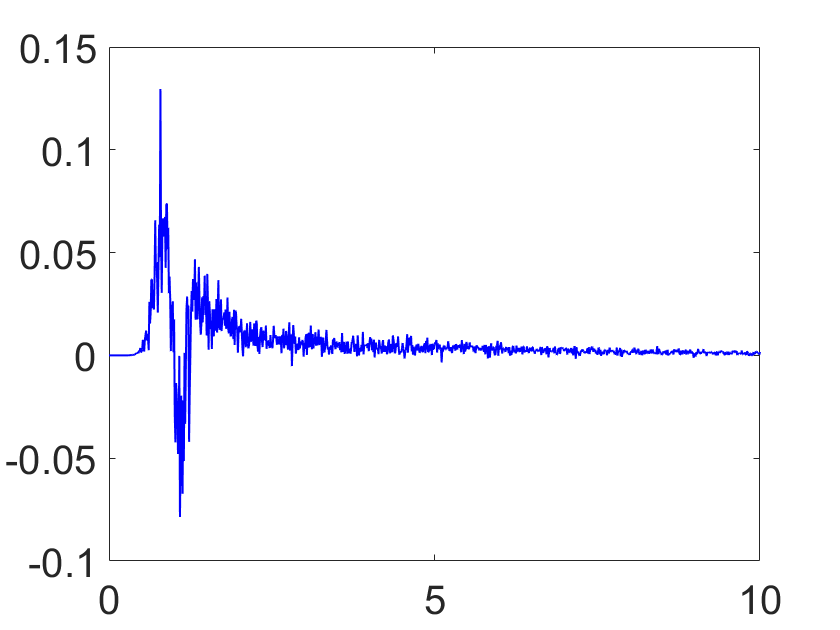}}		
	\subfigure[signal with $\text{SNR}=-1$ dB]{\includegraphics[width=0.45\textwidth]{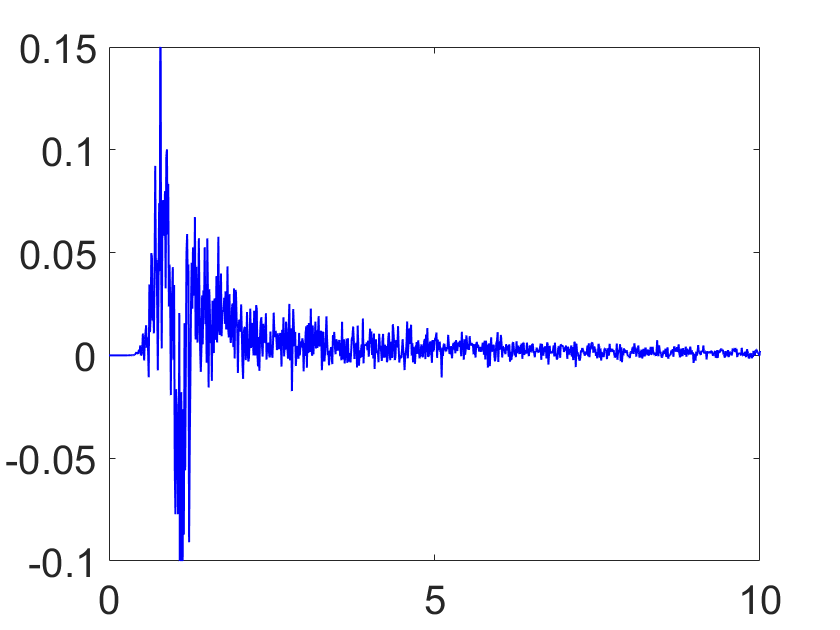}}
	\caption{\label{fig:signal} Time-dependent measurement signal at point $(1,0)$ under different noise levels.}
\end{figure}

\begin{figure}
	\centering
	\subfigure[exact source]{\includegraphics[width=0.45\textwidth]{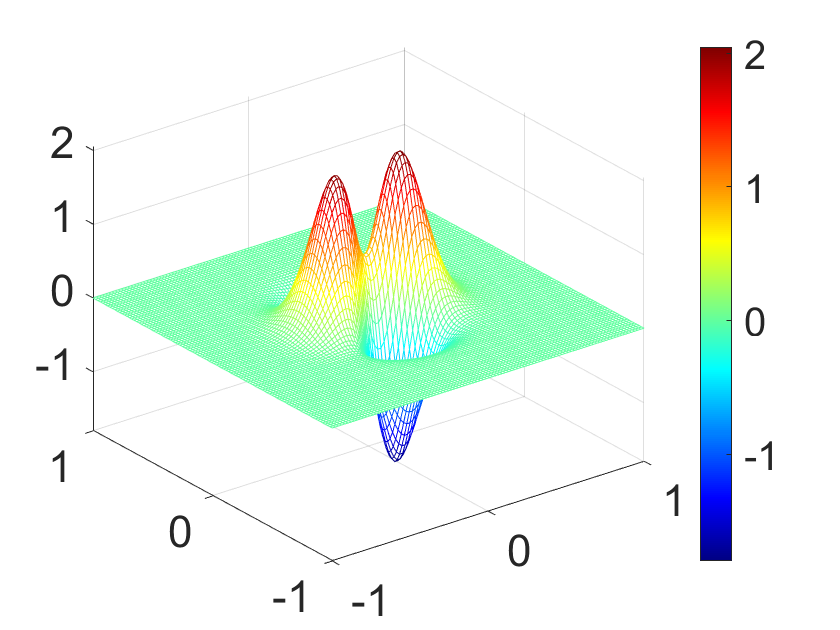}}
	\subfigure[reconstruction with $\text{SNR}=15$ dB]{\includegraphics[width=0.45\textwidth]{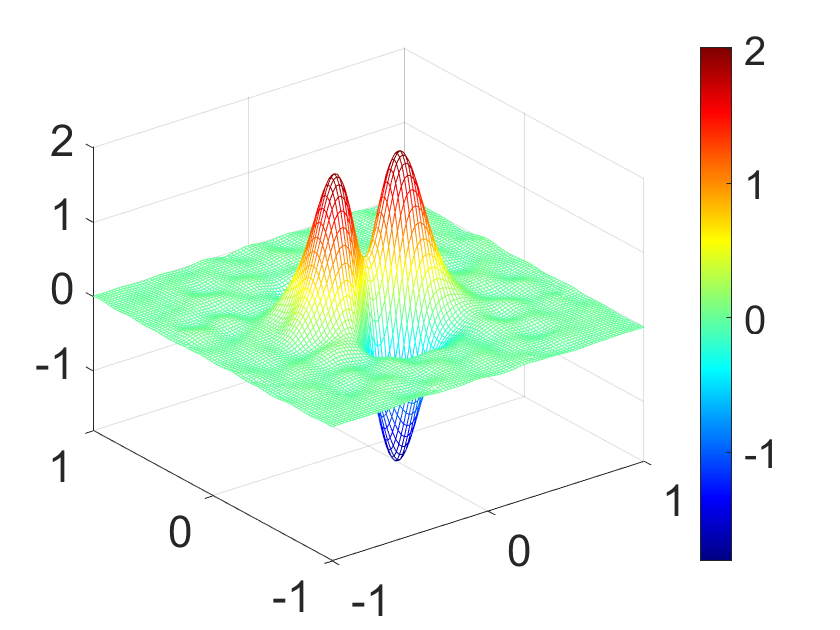}}\\
	\subfigure[reconstruction with $\text{SNR}=5$ dB]{\includegraphics[width=0.45\textwidth]{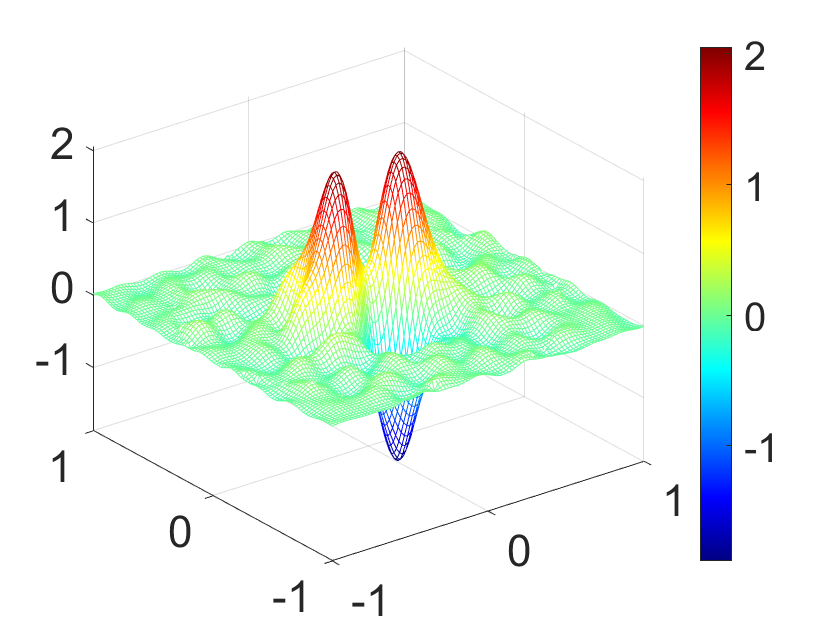}}
	\subfigure[reconstruction with $\text{SNR}=-1$ dB]{\includegraphics[width=0.45\textwidth]{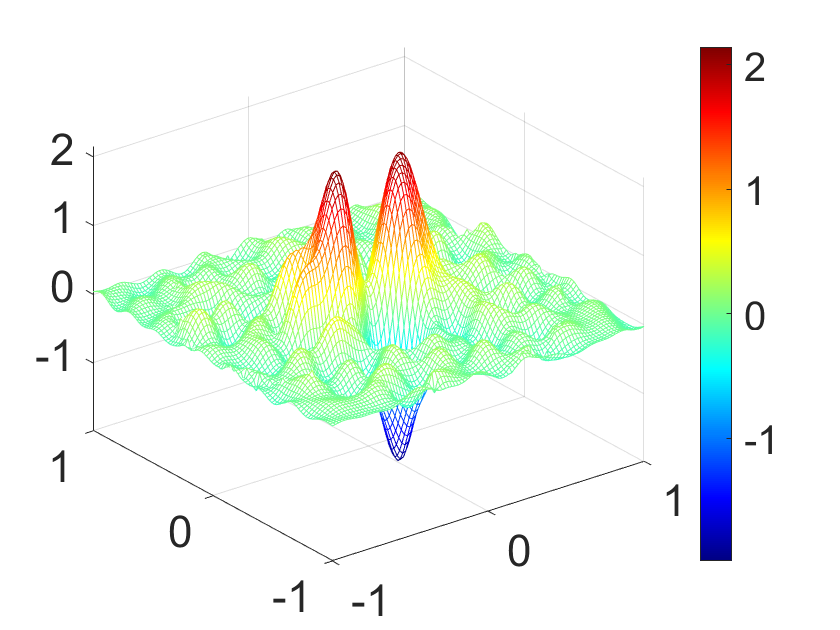}}
	\caption{\label{fig:near_2D}  Mesh plots of exact and reconstructed source functions  with frequency-domain imaging functional \eqref{I-near} under different noise levels in $2$D.}
\end{figure}

\subsection{\bf Example: Near2D}
	In the first example, we consider the following source function in $\R^2$,
	\begin{equation*}
		S(y_1,y_2)=1.1{\rm e}^{-30((y_1-0.01)^2+(y_2-0.12)^2)}-100(y_2^2-y_1^2){\rm e}^{-20(y_1^2+y_2^2)},
	\end{equation*}
    with compact support $\Omega=B_1(0)$,
	which is shown in Figure \ref{fig:near_2D}(a). 
	We place $80$ sensors uniformly on a circle of radius $r = 1$, and the time-dependent wave field $p$ is measured over the time interval $[0,\,10]$ with $1000$ steps. Figure \ref{fig:signal} shows the measured signal at the point $(1,0)$ under different noise levels. 
	
We then take the Fourier transform of time-dependent measurements $p(x,t)$ to obtain the frequency-domain data $u(x,k)$, where $k\in(0,30)$ with $100$ steps. 
Figure \ref{fig:near_2D} presents the exact and reconstructed source functions obtained using the indicator function $\mathcal{I}^{(2)}_{\mathrm{near}}(y)$ defined in \eqref{I-near}. We observe that the proposed method achieves accurate and stable reconstructions, even under extreme noise conditions where the signal is nearly fully corrupted (e.g., $\text{SNR} = -1$ dB).  Specifically, the relative error between the exact and reconstructed source functions is $6.53\%$ when $\text{SNR} = 15$ dB. 	
\begin{figure}
	\centering
	\subfigure[exact source]{\includegraphics[width=0.45\textwidth]{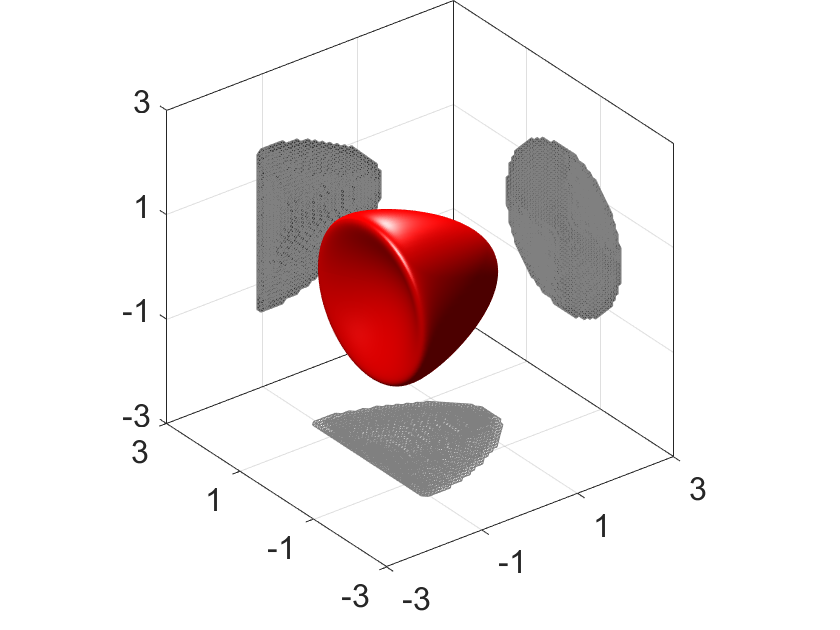}}
	\subfigure[reconstruction with $\text{SNR}=100$ dB]{\includegraphics[width=0.45\textwidth]{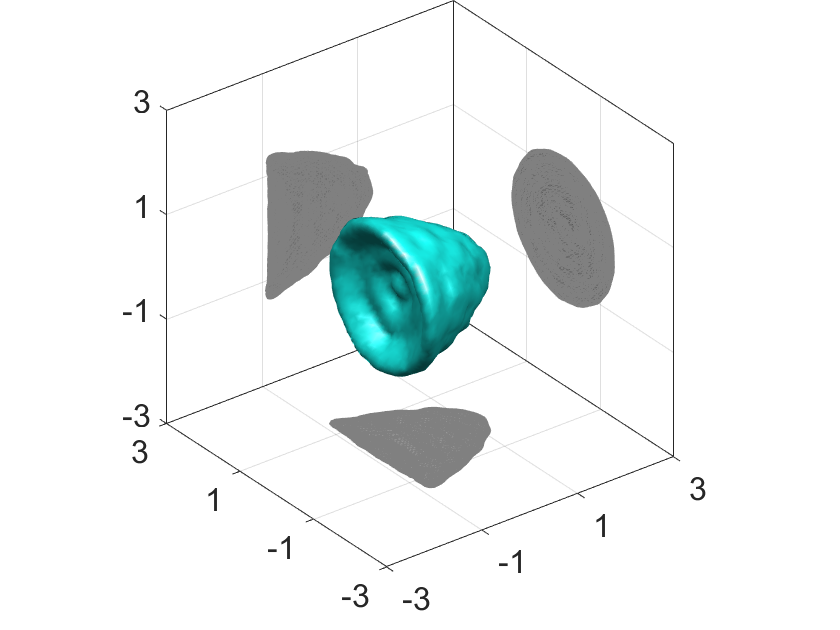}}\\
	\subfigure[reconstruction with $\text{SNR}=20$ dB]{\includegraphics[width=0.45\textwidth]{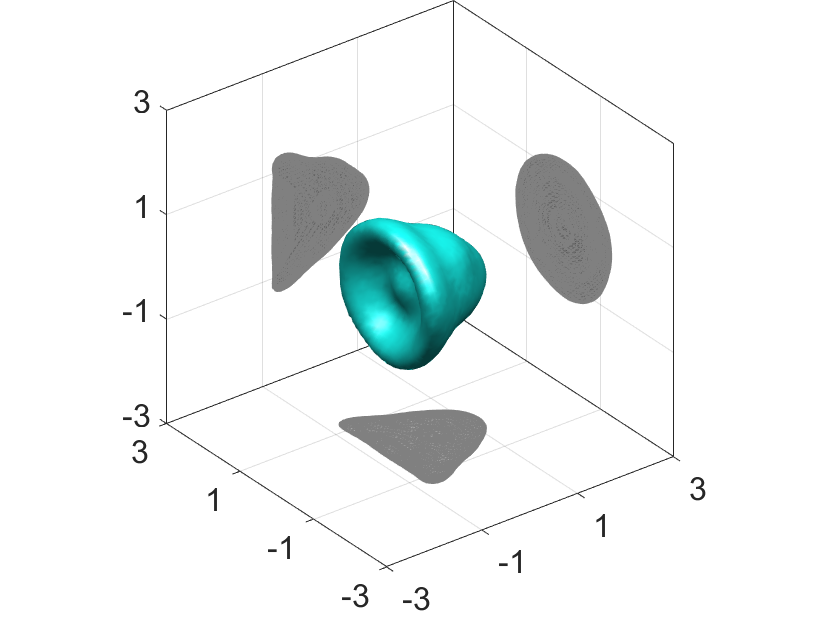}}
	\subfigure[reconstruction with $\text{SNR}=10$ dB]{\includegraphics[width=0.45\textwidth]{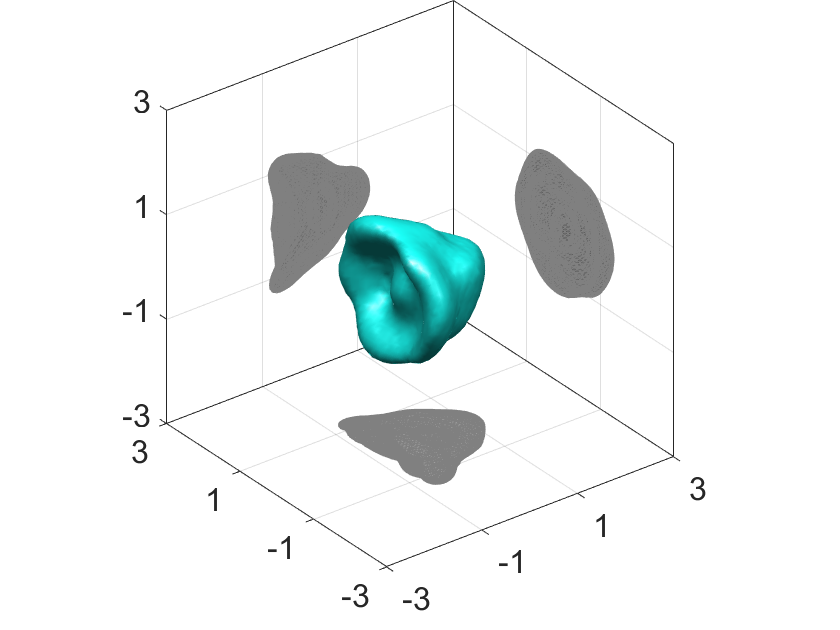}}
	\caption{\label{fig:near3D}  Iso-surface plots of the exact and reconstructed source functions in three dimensions, where the reconstruction is performed using the indicator function \(\mathcal{I}^{(1)}_{\mathrm{near}}(y)\) given in \eqref{eq:near3D_time} under different noise levels.}
\end{figure}

\subsection{\bf Example: Near3D}
	The second example is a source in $\R^3$ given by
    \begin{equation*}
		S(y)=
		\begin{cases}
			 1,  \quad   y\in  \Omega, \\
			 0,  \quad   y\notin  \Omega,\\
		\end{cases}
	\end{equation*}
    where the support $\Omega$ is parameterized as
	\begin{equation*}
		y(\theta, \phi)=(\cos \theta+0.65 \cos 2\theta-0.2, 1.5 \sin \theta \cos \phi, 1.5 \sin \theta \sin \phi), \ \ \theta\in(0, \pi), \phi\in(0, 2\pi).
	\end{equation*}
Figure \ref{fig:near3D}(a) shows the exact source. 
		We place $200$ sensors pseudo-equidistantly distributed on a sphere of radius $r = 3$ and measure time-dependent wave field $p$ over the time interval $[0,\,6]$ with $150$ steps. 
Figure \ref{fig:near3D} presents the iso-surface plots of exact and reconstructed source functions, obtained using the  imaging functional  $\mathcal{I}^{(1)}_{\mathrm{near}}(y) $ in \eqref{eq:near3D_time} under different noise levels, with the iso-surface value set to $0.9$. It can be observed that the proposed method exhibits excellent reconstruction performance at low noise levels; however, the reconstruction quality degrades as the noise level increases. In contrast to the stability observed in Example 1, the loss of robustness stems primarily from the discontinuity of the source function.

For comparison, we consider the alternative indicator $\mathcal{I}^{(2)}_{\mathrm{near}}(y)$ defined in \eqref{I-near}, 
where the wavenumber interval is taken as $k\in(0,30)$ with $100$ steps. From Figures \ref{fig:near3D} and \ref{fig:near3D_fre}, we observe that the time-domain imaging functional achieves better reconstruction performance than its frequency-domain counterpart under the same noise level. 
The  primary reason is that the time-domain formulation effectively integrates information over a continuous spectrum of frequencies, thereby enhancing stability and suppressing noise. In addition, the frequency-domain indicators  require multiplication by powers of $k$, which further amplify high-frequency noise.

\begin{figure}
	\centering
	\subfigure[reconstruction with $\text{SNR}=20$ dB]{\includegraphics[width=0.45\textwidth]{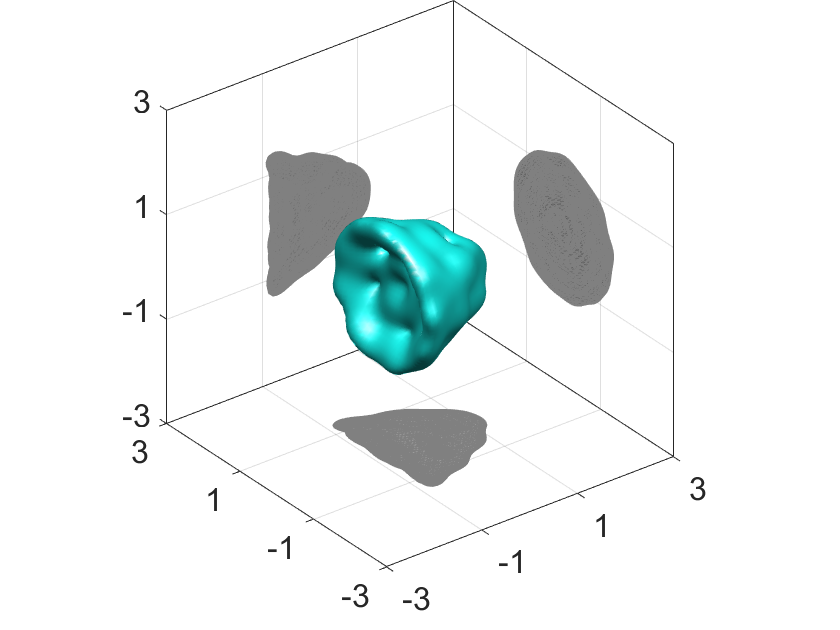}}
	\subfigure[reconstruction with $\text{SNR}=10$ dB]{\includegraphics[width=0.45\textwidth]{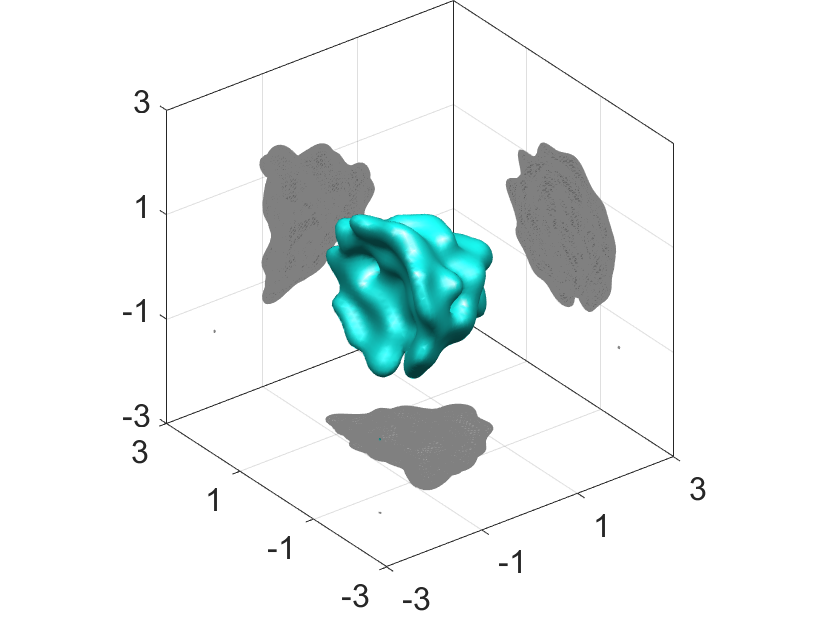}}
	\caption{\label{fig:near3D_fre} Iso-surface plots of the exact and reconstructed source functions in three dimensions, where the reconstruction is performed using the indicator function $\mathcal{I}^{(2)}_{\mathrm{near}}(y)$ given in \eqref{I-near} under different noise levels.}
\end{figure}

\begin{figure}
	\centering
	\subfigure[exact source]{\includegraphics[width=0.45\textwidth]{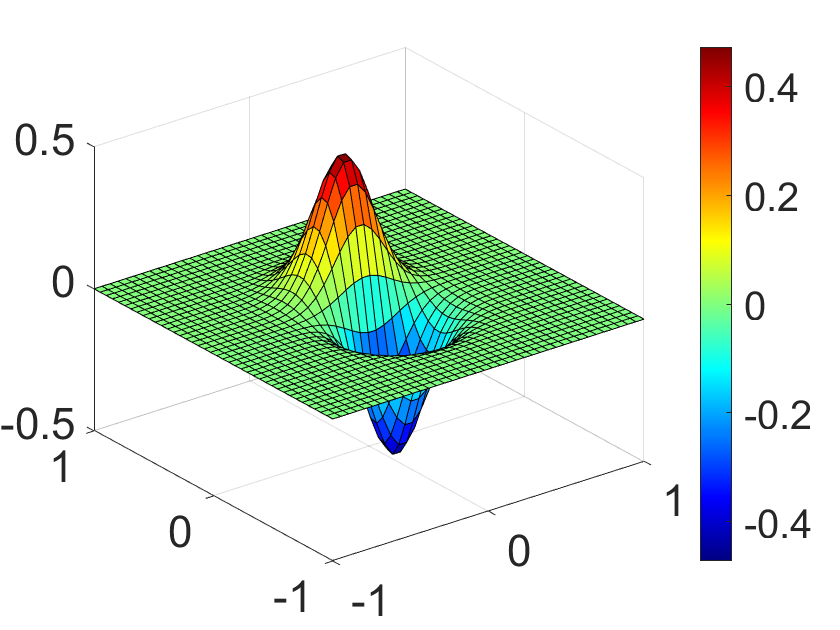}}
	\subfigure[$40\times 40$ directions over ${[-8,8]}^2$]{\includegraphics[width=0.45\textwidth]{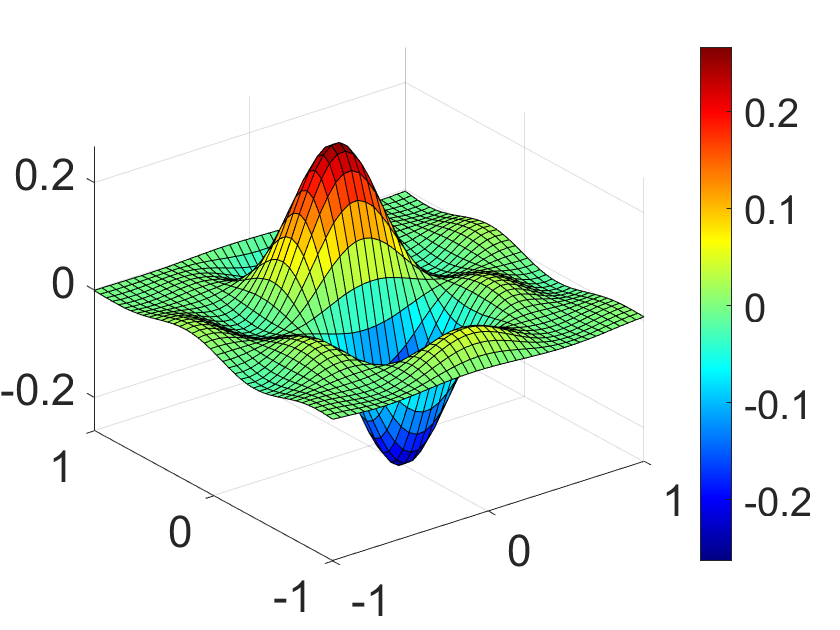}}\\
	\subfigure[$60\times 60$ directions over ${[-10,10]}^2$]{\includegraphics[width=0.45\textwidth]{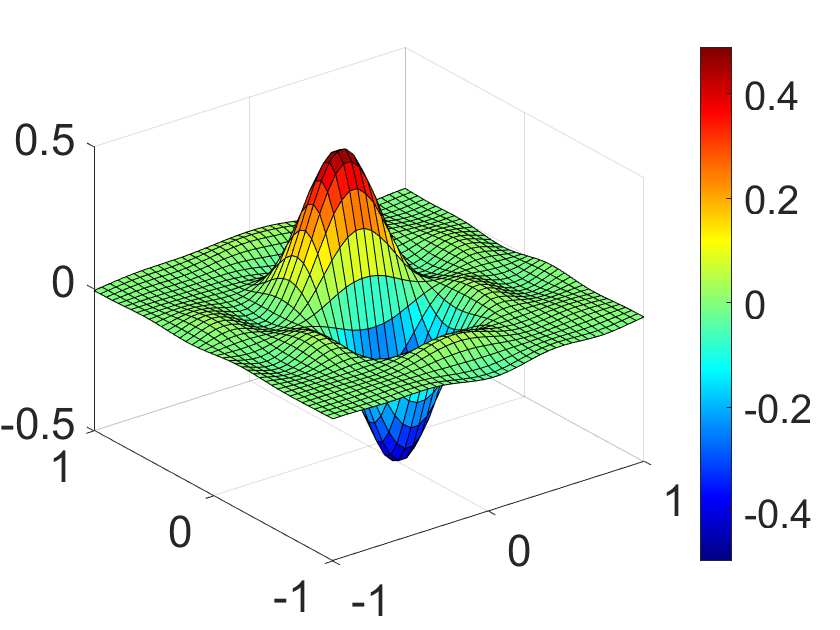}}
	\subfigure[$80\times 80$ directions over ${[-15,15]}^2$]{\includegraphics[width=0.45\textwidth]{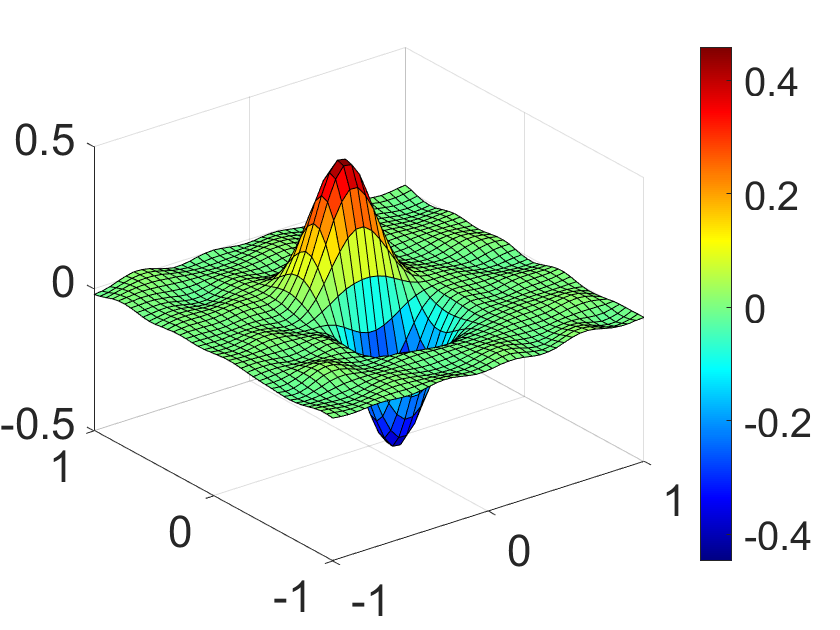}}
	\caption{\label{fig:far2D} Surface plots of exact and reconstructed source functions by plotting $\mathcal{I}_{\text{far}}$ given in \eqref{I-far} under different number of observation directions.}
\end{figure}

\subsection{\bf Example: Far2D}
The third source function is given by 
\begin{equation*}
	S(y_1,y_2)=\frac{1}{2}\left({\rm e}^{-20(y_1^2+(y_2-0.2)^2)}-{\rm e}^{-20(y_1^2+(y_2+0.2)^2)}\right)
\end{equation*}
with compact support $\Omega=B_1(0)$.
The  time-dependent far-field pattern $p^{\infty}$ is measured over the time interval $[-3,\,18]$ with $350$ steps. Moreover, noise corresponding to an $\text{SNR}=-1$ dB is also added to the measured data. 

The sampling mesh consists of a $40 \times 40$ uniform grid  covering the domain  $[-1,1]^2$. 
Figure \ref{fig:far2D} presents the reconstruction results using different sets of observation data. 
It is clear to see that the reconstruction quality improves with more observation directions.

\begin{figure}
	\centering
	\subfigure[exact source]{\includegraphics[width=0.45\textwidth]
		{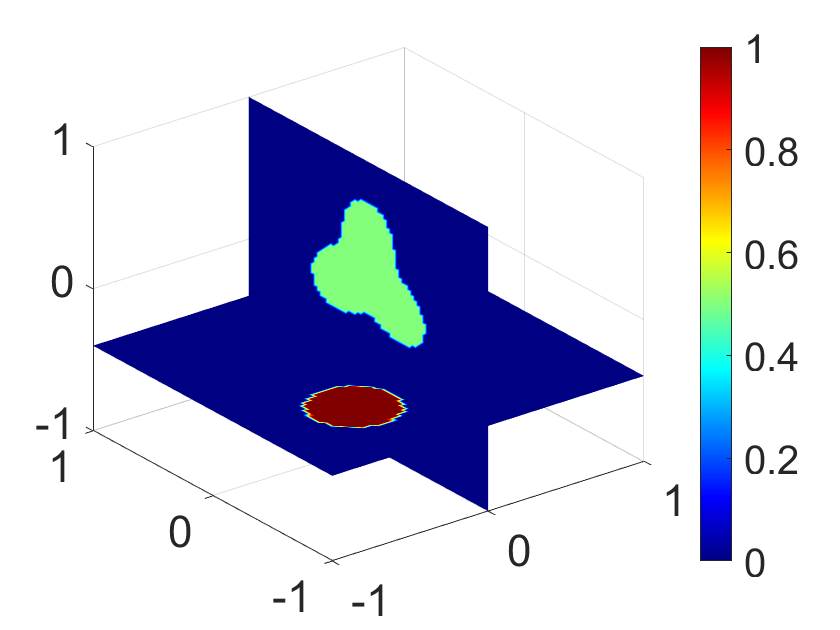}}
	\subfigure[reconstructed source]{\includegraphics[width=0.45\textwidth]
		{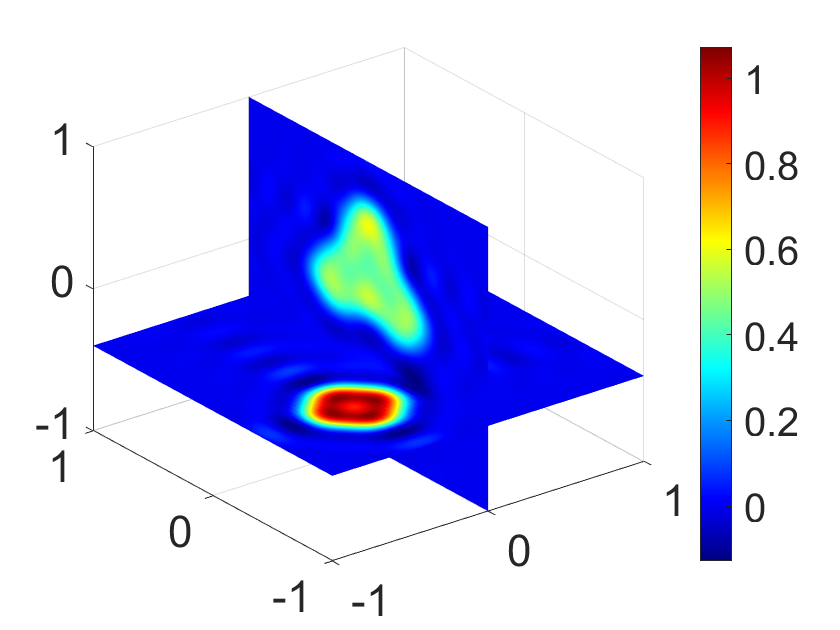}}
	\caption{\label{fig:far3D} Slice plots of the exact and reconstructed source functions obtained from $\mathcal{I}_{\text{far}}$ given in \eqref{I-far} with $\text{SNR}=-1$ dB. A $40 \times 40\times 40$ uniform grid is generated  covering the domain  $[-20,20]^3$.}
\end{figure}

\subsection{\bf Example: Far3D}
Finally, the fourth source function is defined as
\begin{equation*}
	S(y) =
	\begin{cases}
        1, & y \in \mathcal{B}, \\
		0.5, & y \in \mathcal{C}, \\
		0, & \text{otherwise},
	\end{cases}
\end{equation*}
where $\mathcal{B}$ is the ball centered at $(-0.4, -0.4, -0.4)$ with radius $0.4$ and 
$\mathcal{C}$ is a pear-shaped domain centered at $(0,0,0.2)$, given by the parametrization
\begin{equation*}
\mathcal{C}(\theta, \varphi) = \bigl(0.4 + 0.12 \cos 3\theta\bigr) \bigl(\sin\theta \cos\varphi,\ \sin\theta \sin\varphi,\ \cos\theta \bigr),
\end{equation*}
for $\theta \in [0, \pi]$ and $\varphi \in [0, 2\pi]$.
The  time-dependent far field $p^{\infty}$ is measured over the time interval $[-1.2,\, 1.2]$ with $60$ steps. 
The sampling mesh consists of a $60 \times 60\times 60$ uniform grid  covering the domain  $[-1,1]^3$. Figure \ref{fig:far3D} demonstrates that the proposed imaging functional achieves accurate reconstruction of discontinuous source functions in three dimensions using far-field measurements.

\section{Conclusion}
In this work, we have developed a novel quantitative direct sampling method for reconstructing the initial source distribution in acoustic wave equations from either near-field or far-field time-domain measurements. The proposed indicator functions are constructed via spacetime integrals of the measured wave field combined with appropriately designed auxiliary functions, which not only lead to a constructive uniqueness proof for the inverse initial value problem, but also yield a stable, efficient, and fully quantitative reconstruction scheme. In contrast to existing time-domain sampling methods, which are mostly qualitative, the present framework combines rigorous theoretical guarantees with practical computational efficiency, bridging the gap between frequency-domain analysis and real-time time-domain imaging. Extensive numerical examples further verify the accuracy, robustness against noise, and computational efficiency of the method, demonstrating its effectiveness in both near-field and far-field acoustic imaging scenarios.

Finally, we emphasize that the proposed mathematical framework has also been extended in the Appendix to treat unknown displacement sources. As a result, the method is capable of recovering either the initial velocity or the initial displacement distribution individually.
However, the current formulation does not support the simultaneous reconstruction of both initial velocity and displacement sources. Addressing this limitation represents a key research direction, which we will explore in our future work.


\appendix
\section{Inverse initial pressure source problems}
In this appendix, we extend  the scattering of initial velocity term \eqref{eq:main} to the initial displacement term, which is a common alternative configuration in acoustic imaging. We establish corresponding uniqueness results for both near-field and far-field measurements, complementing the main text.

Consider the time-domain wave equation for an initial displacement source $S(x)$
 (with zero initial velocity):
\begin{equation}\label{eq:main2}
	\begin{aligned}
		&\partial_{tt}p(x,t)-\Delta p(x,t)=0, \quad \quad \quad \quad (x,t)\in  \mathbb{R}^d \times \mathbb{R},\\
		& p(x, 0)=S(x), \quad \partial_t p(x,0)=0, \quad \ \ x\in \mathbb{R}^d.
	\end{aligned}
\end{equation}
Without loss of generality, we focus on the three-dimensional case and the two-dimensional case follows from analogous arguments.
By Duhamel's principle, the system \eqref{eq:main2} can be rewritten as
\begin{equation}\label{eq:Duham2}
	\begin{aligned}
		& \partial_{tt}p(x,t)-\Delta p(x,t)=\delta'(t) S(x),  \quad (x,t)\in  \mathbb{R}^3 \times \mathbb{R},\\
		& p(x, 0)=0, \quad \partial_t p(x,0)=0,  \quad \quad \ \ x\in \mathbb{R}^3
	\end{aligned}
\end{equation}
where $\delta'(t)$ denotes the distributional derivative of the Dirac delta function.
Thus, the wave field can be represented by
 \begin{equation*}
		p(x,t)
		=\int_{-\infty}^{\infty}\int_{\mathbb{R}^3} 
		\frac{\delta(t-\tau-|x-y|)}{4\pi|x-y|}  \delta'(\tau)   S(y)\, \mathrm{d}y\, \mathrm{d}\tau =\int_{\mathbb{R}^3} \frac{\delta'(t-|x-y|)}{4\pi|x-y|}  S(y)\, \mathrm{d}y.
\end{equation*}
Correspondingly, by  similar arguments to Theorem \ref{thm:far-field}, one can get the far-field pattern
\begin{equation}\label{eq:Fourier2}
	p^{\infty}(\hat x, t)=\frac{1}{4\pi}\int_{\mathbb{R}^3} \delta'(t+\hat x \cdot y) S(y)\, \mathrm{d}y,
\end{equation}

Applying the Fourier transform to the  time-dependent wave field $p(x,\cdot)$, 
then the time domain system \eqref{eq:Duham2} reduces to the Helmholtz equation with the Sommerfield radiation condition
\begin{equation*}
	\begin{aligned}
		& -\Delta u- k^2 u=\mathrm{i} k S  \quad\mbox{in} \,\mathbb{R}^3,\\
		& \lim_{r\to\infty} r
		\left( \frac{\partial u}{\partial r}-\mathrm{i}ku \right)=0, \qquad \quad \qquad \ r=|x|.
	\end{aligned}
\end{equation*}
The solution to the last equation is given by 
	\begin{equation}\label{eq:fre-solution-A}
	u(x,k)=\mathrm{i} k  \int_{\mathbb{R}^3}  \Phi(x,z; k) \,  S(z) \, \mathrm{d}z,\quad (x,k)\in  \mathbb{R}^3 \times \mathbb{R}^+.
\end{equation}

Now we present the uniqueness results for reconstructing  the initial displacement source $S$ from the time-domain near-field and far-field measurements, respectively.
\begin{theorem}
The initial displacement $S\in C_c^{\infty}(\Omega)$ can be represented by 
\begin{equation}\label{eq:Source-A}
		S(y) = 2 \int_{\Gamma} \langle
        \partial_{\nu_x} G(x,y; \cdot), p(x,\cdot)\rangle
		\,\mathrm{d}s(x),\quad y\in\R^3.
	\end{equation}
\end{theorem}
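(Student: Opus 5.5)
The plan is to mirror the proof of Theorem \ref{thm:near3D} almost line by line, with the extra time derivative now carried by the data rather than by the test function. Since $S\in C_c^\infty(\Omega)$, the field $p(x,t)=\int \delta'(t-|x-y|)S(y)/(4\pi|x-y|)\,\mathrm{d}y$ is, for $x\in\Gamma$, smooth and compactly supported in $t$ (strong Huygens' principle in $\R^3$). Hence $p(x,\cdot)\in\mathscr{S}(\R)$ and $u(x,\cdot)=\mathscr{F}[p(x,\cdot)]\in\mathscr{S}(\R)$. The distribution $\partial_{\nu_x}G(x,y;\cdot)$ is tempered, so the Parseval-type identity \eqref{Fourier-distributions} applies. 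Using \eqref{eq:Fourier-fund}, we get $\mathscr{F}[\partial_{\nu_x}G](k)=\frac{\mathrm{i}k}{4\pi}\partial_{\nu_x}h_0^{(1)}(k|x-y|)$. The right-hand side of \eqref{eq:Source-A} therefore becomes
\begin{equation*}
\frac{1}{\pi}\int_\Gamma\frac{1}{2\pi}\int_{-\infty}^{\infty}\frac{\mathrm{i}k}{4\pi}\,\partial_{\nu_x}h_0^{(1)}(k|x-y|)\,\overline{u(x,k)}\,\mathrm{d}k\,\mathrm{d}s(x).
\end{equation*}
Fubini is justified exactly as before, since $|k\,\partial_\nu h_0^{(1)}|$ is bounded for $k\ge1$ and $O(1/k)$ near $0$. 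The singularity at $0$ is harmless because of the factor $k$ that $u$ carries near $k=0$.

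Next I fold the integral over $k<0$ onto $k>0$. For real $S$, $p$ is real, so $u(x,-k)=\overline{u(x,k)}$ and $h_0^{(1)}(-kr)$ corresponds to conjugation appropriately. The total becomes a real-part integral over $(0,\infty)$ of $\Re[\mathrm{i}k\,\partial_\nu h_0^{(1)}\,\overline{u}]$ with weight $k$. The key structural point is \eqref{eq:fre-solution-A}. There we write $u=\mathrm{i}k\,v$, where $v=\int\Phi S$ is the velocity-source field of Section 2 with $\Im v=\frac{k}{4\pi}\int j_0(k|x-z|)S(z)\,\mathrm{d}z$. Then $\mathrm{i}k\,\overline{u}=\mathrm{i}k\cdot(-\mathrm{i}k)\overline{v}=k^2\overline{v}$. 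The expression therefore reduces to
\begin{equation*}
\frac{1}{2\pi^2}\int_0^\infty\int_\Gamma \Re\big[\partial_{\nu_x}h_0^{(1)}(k|x-y|)\,\overline{v(x,k)}\big]k^2\,\mathrm{d}s(x)\,\mathrm{d}k,
\end{equation*}
which is precisely the RHS appearing in the proof of Theorem \ref{thm:near3D}, with constants matching after tracking the factor $2$ and the $1/(4\pi)$. I would verify the constants carefully at this step, since that is where an error would enter.

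From there I invoke the chain already established: the real-part decomposition using $\Im v$, the reciprocity Lemma \ref{I(y,z)=I(z,y)}, and Green's representation of $j_0(k|y-z|)$. These reduce the expression to $\frac{1}{2\pi^2}\int_0^\infty\int j_0(k|y-z|)S(z)\,\mathrm{d}z\,k^2\,\mathrm{d}k$. The Abel regularization $\mathrm{e}^{-\epsilon k}$ combined with the 3D Poisson kernel then gives $S(y)$, exactly as in \eqref{RHS-noepsilon} and the lines following it.

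The main obstacle is the first step. One must justify that $\langle\partial_{\nu_x}G,p\rangle$ equals the frequency integral, and in particular handle the behaviour at $k=0$ and the sign and conjugation conventions when folding $k<0$ onto $k>0$. I would handle this by exploiting the explicit factor $\mathrm{i}k$ in $u$. That factor makes $\overline{u}\,k\,\partial_\nu h_0^{(1)}$ locally integrable, and it renders the folding a direct consequence of the reality of $p$.
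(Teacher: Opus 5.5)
Your route is the paper's route, and it breaks at the same step: the relation $u=\mathrm{i}k\,v$ that you take from \eqref{eq:fre-solution-A} has the wrong sign. The kernel $\delta'(t-|x-y|)/(4\pi|x-y|)$ is $\partial_t G(x,y;t)$. Hence $p=\partial_t p_{\rm vel}$, where $p_{\rm vel}$ is the Section~2 field with the same $S$. Under the paper's convention $\mathscr{F}[\partial_t g](k)=-\mathrm{i}k\,\mathscr{F}[g](k)$, this gives $u=-\mathrm{i}k\,v$. Equivalently, $\mathscr{F}[\delta'](k)=-\mathrm{i}k$, so the Helmholtz source is $-\mathrm{i}kS$ and $\Re u=+\frac{k^2}{4\pi}\int j_0(k|x-z|)S(z)\,\mathrm{d}z$. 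Therefore $\mathrm{i}k\,\overline{u}=-k^2\overline{v}$, and your reduction lands on exactly minus the quantity evaluated in Theorem~\ref{thm:near3D}, i.e.\ on $-S(y)$. This is the step you flagged for checking constants: the magnitude $\frac{1}{2\pi^2}$ is right, but the sign is not. The paper's own proof uses the same relation (and $\Re u=-\frac{k^2}{4\pi}\int j_0 S$), so following it faithfully does not repair the problem.

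Two independent checks confirm that the printed identity is off by a sign.

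First, $p_{\rm vel}(x,\cdot)\in C_c^\infty(\mathbb{R})$ for $x\in\Gamma$, so $\langle\partial_{\nu_x}G,\partial_t p_{\rm vel}\rangle=-\langle\partial_{\nu_x}\partial_t G,p_{\rm vel}\rangle$. Theorem~\ref{thm:near3D} then gives $2\int_\Gamma\langle\partial_{\nu_x}G,p\rangle\,\mathrm{d}s=-S(y)$. This one-line integration by parts is also a cleaner proof than redoing the Fourier chain.

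Second, take radial $S$, $y=0$ and $\Gamma=\partial B_R$, and write $p(r,t)$ for the radial profile. Then $2\int_\Gamma\langle\partial_\nu G(x,0;\cdot),p(x,\cdot)\rangle\,\mathrm{d}s=2R\,\partial_t p(R,R)-2p(R,R)$. From $rp(r,t)=\frac12[(r+t)S(r+t)+(r-t)S(r-t)]$ one gets $p(R,R)=0$ and $\partial_t p(R,R)=-S(0)/(2R)$, so the value is $-S(0)$.

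What your argument proves, once the sign is corrected, is $S(y)=-2\int_\Gamma\langle\partial_{\nu_x}G(x,y;\cdot),p(x,\cdot)\rangle\,\mathrm{d}s(x)$.

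Two minor points, neither of which matters because $u(x,\cdot)\in\mathscr{S}(\mathbb{R})$:
\begin{itemize}
\item Your displayed prefactor $\frac{1}{\pi}\cdot\frac{1}{2\pi}$ counts $\frac{1}{2\pi}$ twice; it should be $2\cdot\frac{1}{2\pi}$.
\item $k\,\partial_{\nu_x}h_0^{(1)}(k|x-y|)$ is $O(1)$ as $k\to 0$ and $O(k)$ as $k\to\infty$. There is no singularity at $k=0$ to cancel, since $\mathscr{F}[\partial_{\nu_x}G]=\partial_{\nu_x}\Phi$ is bounded.
\end{itemize}
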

\begin{proof}
Since $\partial_{\nu_x}  G(x, y;  \cdot)\in \mathscr{S}(\mathbb{R})$  and $p(x,\cdot) \in C_c^{\infty}(\mathbb{R}) \subset \mathscr{S}(\mathbb{R})$, 
using the  Fourier transform of tempered distributions  and the fundamental solution \eqref{eq:Fourier-fund},  the right-hand side of  \eqref{eq:Source-A} is given by
\begin{equation*}
	\begin{aligned}
		 \text{RHS}&=2 \int_{\Gamma} \frac{1}{2\pi}\int_{-\infty}^{\infty}   \partial_{\nu_x}\left(\frac{\mathrm{i}k}{4\pi}h_0^{(1)}(k|x-y|) \right)\, \overline{u} (x,k)\mathrm{d}k \, \mathrm{d}s(x)\\
		&=\frac{1}{2\pi^2}   \int_{0}^{\infty}  \int_{\Gamma}   \Re\left[ \mathrm{i} k \, \partial_{\nu_x} h_0^{(1)}(k|x-y|) \, \overline{u}(x,k) \right]   \mathrm{d}s(x) \,\mathrm{d}k, 
	\end{aligned}
\end{equation*}
where the last justified by Fubini's theorem as $u(x,\cdot) \in \mathscr{S}(\mathbb{R})$. 

From the representation \eqref{eq:fre-solution-A}, one gets
	\begin{equation*}
 \Re (u(x,k)) =\frac{-k^2}{4\pi}\int_{\mathbb{R}^3} j_0(k|x-z|) S(z) \, \mathrm{d}z,
\end{equation*}
together with \eqref{eq:fre-solution-A}, we obtain
\begin{equation*}
\begin{aligned}
& \text{RHS}=\frac{1}{2\pi^2}   \int_{0}^{\infty}  \int_{\Gamma}   \left[ -\mathrm{i}  \, \partial_{\nu_x} j_0(k|x-y|) u(x,k) + \mathrm{i} \Re(u(x,k)) \partial_{\nu_x}h_0^{(1)}(k|x-y|) \right] k \, \mathrm{d}s(x) \,\mathrm{d}k \\
&= \frac{1}{2\pi^2} \int_{0}^{\infty} \int_{\Gamma} \int_{\mathbb{R}^3} \left[\frac{\partial j_0(k|x-y|)}{\partial \nu_x}\Phi(x,z;k) - j_0(k|x-z|) \frac{\partial \Phi(x, y; k) }{\partial \nu_x}\right]  S(z)\, \mathrm{d}z\, \mathrm{d}s(x)\, k^2\, \mathrm{d}k\\
&= S(y),\quad y\in\R^3,
\end{aligned}
\end{equation*}
where the last equality follows directly from the argument used in the proof of Theorem~\ref{thm:near3D}.  
This completes the proof. 
\end{proof}

\begin{theorem}
Given the time-domain far-field pattern $p^\infty$, the initial displacement  $S\in C_c(\Omega)$ in $\R^3$ admits the representation
	\begin{equation}\label{eq:indicator-far3D-A}
		S(y) =\frac{1}{2\pi^2}  \int_{\mathbb{R}^3} \int_{-\infty}^{\infty}  p^{\infty}\left(\hat x, t-\hat x\cdot y \right) \frac{\mathrm{i} \mathrm{e}^{\mathrm{i}|x|t }}{|x|}\, \mathrm{d}t\, \mathrm{d}x,\quad y\in\R^3.
	\end{equation}
\end{theorem}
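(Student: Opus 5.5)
We follow the three-dimensional part of the proof of Theorem~\ref{thm:far23D}. The only new feature is that the kernel $\delta$ is replaced by $\delta'$ in \eqref{eq:Fourier2}, and this is compensated by the weight $\mathrm{i}/|x|$ in \eqref{eq:indicator-far3D-A}. First substitute \eqref{eq:Fourier2} with the shifted time argument:
\begin{equation*}
p^{\infty}(\hat x, t-\hat x\cdot y)=\frac{1}{4\pi}\int_{\mathbb{R}^3}\delta'\bigl(t+\hat x\cdot(z-y)\bigr)S(z)\,\mathrm{d}z .
\end{equation*}
Then perform the $t$-integration first, reading it as the action of the distribution $\delta'(\cdot+\hat x\cdot(z-y))$ on the smooth function $t\mapsto \mathrm{e}^{\mathrm{i}|x|t}$. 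Since $\int \delta'(t-a)g(t)\,\mathrm{d}t=-g'(a)$ with $a=-\hat x\cdot(z-y)$, we get
\begin{equation*}
\int_{-\infty}^{\infty}\delta'\bigl(t+\hat x\cdot(z-y)\bigr)\mathrm{e}^{\mathrm{i}|x|t}\,\mathrm{d}t=-\mathrm{i}|x|\,\mathrm{e}^{-\mathrm{i}x\cdot(z-y)} .
\end{equation*}
Equivalently, by the Fourier convention of the paper, $\mathscr F[\delta'](k)=-\mathrm{i}k$, combined with the time-shift property already used for the far-field correspondence theorem.

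Next multiply by the weight. The factor $\mathrm{i}/|x|$ turns $-\mathrm{i}|x|$ into $1$, and the constants combine as $\frac{1}{2\pi^2}\cdot\frac{1}{4\pi}=\frac{1}{(2\pi)^3}$. The right-hand side of \eqref{eq:indicator-far3D-A} therefore becomes
\begin{equation*}
\frac{1}{(2\pi)^3}\int_{\mathbb{R}^3}\int_{\mathbb{R}^3}\mathrm{e}^{-\mathrm{i}x\cdot(z-y)}S(z)\,\mathrm{d}z\,\mathrm{d}x ,
\end{equation*}
which is exactly the expression obtained in the proof of Theorem~\ref{thm:far23D} for $d=3$. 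Fourier inversion then gives $S(y)$.

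The main obstacle is rigor in exchanging the order of integration. The weight $1/|x|$ is singular at the origin, although it is integrable in $\mathbb{R}^3$, and the $x$-integral of $\widehat S(x)\mathrm{e}^{\mathrm{i}x\cdot y}$ is only conditionally meaningful for $S\in C_c(\Omega)$. I would handle this as the paper does, interpreting the outer integral as a Fourier inversion integral, for instance with an Abel/Gaussian regularization $\mathrm{e}^{-\epsilon|x|}$ and $\epsilon\to0^+$. For fixed $x$, the inner $t$–$z$ exchange is justified because $p^\infty(\hat x,\cdot)$ is a compactly supported distribution in $t$: $\Omega$ is bounded, so the hyperplanes $\hat x\cdot z=-t$ meet $\Omega$ only for bounded $t$. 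Hence pairing with $\mathrm{e}^{\mathrm{i}|x|t}$ is well defined and commutes with the $z$-integral. The product of $1/|x|$ and $|x|$ cancels pointwise for $x\neq0$, so no singularity survives, and the regularized limit reproduces the Fourier inversion formula.
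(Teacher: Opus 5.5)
Your proposal is correct and follows essentially the same route as the paper. You substitute the $\delta'$ far-field representation, evaluate the $t$-integral via $\int_{-\infty}^{\infty}\delta'(t+a)\mathrm{e}^{\mathrm{i}|x|t}\,\mathrm{d}t=-\mathrm{i}|x|\mathrm{e}^{-\mathrm{i}|x|a}$, cancel the weight $\mathrm{i}/|x|$, and conclude by Fourier inversion with constant $(2\pi)^{-3}$. Your extra remarks (the pairing of a compactly supported distribution with $\mathrm{e}^{\mathrm{i}|x|t}$, and reading the outer $x$-integral in an Abel-regularized sense) make explicit the justification the paper leaves implicit, rather than departing from its argument.
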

\begin{proof}
Substituting the far-field pattern \eqref{eq:Fourier2} into the right-hand side of \eqref{eq:indicator-far3D-A}, it yields  
	\begin{equation*}
		\begin{aligned}
			\text{RHS}=\frac{1}{2\pi^2}  \int_{\mathbb{R}^3} \int_{-\infty}^{\infty}  \left(\frac{1}{4\pi}\int_{\mathbb{R}^3} \delta'(t+ \hat x \cdot z - \hat x \cdot y) S(z)\, \mathrm{d}z\right) \frac{\mathrm{i} \mathrm{e}^{\mathrm{i}|x|t }}{|x|} \, \mathrm{d}t\, \mathrm{d}x. 
		\end{aligned}
	\end{equation*}
Using the distributional identity
\begin{equation}\label{eq:distri-identity-A}
\int_{-\infty}^{\infty}\delta'(t+a){\rm e}^{{\rm i}|x|t}\, {\rm d}t
=-{\rm i}|x|{\rm e}^{-{\rm i}|x|a},
\end{equation}
the factor $-{\rm i}|x|$ cancels ${\rm i}/|x|$, yielding
\begin{equation*}
\mathrm{RHS}
=\frac{1}{(2\pi)^3}
\int_{\mathbb{R}^3}\int_{\mathbb{R}^3}
{\rm e}^{-{\rm i}x\cdot(z-y)}S(z)\,{\rm d}z\,{\rm d}x=S(y),\quad y\in\R^3.
\end{equation*}
\end{proof}
\begin{remark}
	Motivated by the above constructive theorem proofs, one may reconstruct the initial displacement by using the following indicators
	\begin{equation*}
		\mathcal{J}_{\text{near}}(y) = 2 \int_{\Gamma} \langle
        \partial_{\nu_x} G(x,y; \cdot), p(x,\cdot)\rangle
		\,\mathrm{d}s(x),\quad  y\in D, 
	\end{equation*}
	and 
	\begin{equation*}
		\mathcal{J}_{\text{far}}(y) =\frac{1}{2\pi^2}  \int_{\mathbb{R}^3} \int_{-\infty}^{\infty}  p^{\infty}\left(\hat x, t-\hat x\cdot y \right) \frac{\mathrm{i} \mathrm{e}^{\mathrm{i}|x|t }}{|x|}\, \mathrm{d}t\, \mathrm{d}x, \quad  y\in D.
	\end{equation*}
Moreover, we emphasize that the singularity of the imaging functional $\mathcal{J}_{\mathrm{far}}$ at $|x|=0$ is removable, as the factor $|x|$ arising from the distributional identity \eqref{eq:distri-identity-A} exactly cancels the prefactor $1/|x|$.
\end{remark}

\section*{Acknowledgement}
The research of X. Liu is supported by the National Key R\&D Program of China under grant 2024YFA1012303 and the NNSF of China under grant 12371430.
The work of  X. Wang is supported by NSFC grant 12471397 and Heilongjiang
Provincial Natural Science Foundation grant YQ2024A003.

\end{document}